\documentclass[12pt]{article}
\usepackage{lineno,hyperref}
\usepackage{enumerate,fullpage}
\usepackage{authblk}
\usepackage{amssymb,amsmath,graphicx,amsthm}
\usepackage{color}
\usepackage{amsfonts}
\usepackage{epsfig}
\usepackage{makeidx}
\usepackage{lineno,hyperref}
\usepackage{graphicx,epstopdf}
\usepackage{subfigure}
\usepackage{float}
\usepackage[numbers]{natbib}
\bibliographystyle{plainnat}  

\epstopdfsetup{update}
\theoremstyle{thmstyleone}%
\newtheorem{theorem}{Theorem}
\newtheorem{proposition}[theorem]{Proposition}
\newtheorem{corollary}{Corollary}
\theoremstyle{thmstyletwo}%
\newtheorem{result}{Result}
\theoremstyle{thmstylethree}%
\newtheorem{definition}{Definition}%

\raggedbottom
\title{Mittag–Leffler Distributions: Estimation and Autoregressive Framework}
\author[1]{Monika S. Dhull}
\affil[1]{Department of Mathematics, IIT Ropar}
\date{\today}
\begin{document}
	\maketitle
	
	
		
	
	\begin{center}
		\abstract{This work deals with the estimation of parameters of Mittag-Leffler (ML($\alpha, \sigma$)) distribution. We estimate the parameters of ML($\alpha, \sigma$) using empirical Laplace transform method. The simulation study indicates that the proposed method provides satisfactory results. The real life application of ML($\alpha, \sigma$) distribution on high frequency trading data is also demonstrated. We also provide the estimation of three-parameter Mittag-Leffler distribution using empirical Laplace transform. Additionally, we establish an autoregressive model of order 1, incorporating the Mittag-Leffler distribution as marginals in one scenario and as innovation terms in another. We apply empirical Laplace transform method to estimate the model parameters and provide the simulation study for the same. }
	\end{center}

		\textit{Keywords:} Mittag-Leffler distribution, parameter estimation, Empirical Laplace transform, autoregressive process\\
		\textbf{MSC Classification:} 60E07, 60G10

\section{Introduction}\label{sec:ML_intro}
The Mittag-Leffler (ML($\alpha, \sigma$)) distribution which was originally studied by \cite{Pillai}, belongs to the family of heavy-tailed distribution and has gained popularity in recent years. The applications of ML($\alpha, \sigma$) in various fields, such as studying dynamical phenomena in physics \cite{Weron}, modeling river flow data using time series \cite{Jaya2003}, and analyzing financial data \cite{Kozu1994}, has attracted researchers from different domains. Various stochastic processes with ML($\alpha, \sigma$) distribution for e.g., Neveu branching process, renewal process with rare events are also studied in literature \cite{Huillet2016}. \cite{Kerss} studied fractional skellam process which is obtained by taking the difference of two independent fractional Poisson processes is shown to have applications in the modelling of high frequency data. By considering Mittag-Leffler distribution as a waiting time instead of exponential distribution in classical Poisson process, the time-fractional Poisson process are defined in literature. Further, this distribution is infinitely divisible and hence Mittag-Leffler L\'evy process and more general Prabhakar L\'evy processes are defined \cite{Prabh1}. The distribution is ubiquitous in literature, therefore, the estimation of the Mittag-Leffler distribution holds significant importance across various scientific disciplines. However, the density function of ML($\alpha, \sigma$) distribution is in a complex form (infinite series form or integral form) which is the main obstacle in estimation task. Also, the distribution does not have integer order moments. The parameter estimation based on fractional moments for ML($\alpha, \sigma$) was first addressed by \cite{Kozu}. The main issue with the proposed method is that it requires prior selection and information about the true parameters. The method also gives the estimate of $\alpha$ more than $1$, where as the true parameter $\alpha \in (0,1)$. Later, \cite{Cahoy2013} proposed the algorithm based on log transform of ML($\alpha, \sigma$) distribution which is less restrictive and computationally simple. Mittag-Leffler distribution also plays an important role in defining fractional Poisson process, which efficiently captures the main characteristics of the oil futures high frequency trading data \cite{Arun2020}. \par
Following the two-parameter Mittag-Leffler function ML($\alpha, \sigma$) and distribution, its generalization to the three-parameter form, also known as Prabhakar function, gained widespread popularity. The Prabhakar function was employed to introduce a discrete probability distribution \cite{Tomovski2016}. Further, a more general fractional Poisson process based on the Prabhakar function was also defined. \cite{Ovidio2018} used the Prabhakar operator to get the stochastic solution to a generalized fractional partial differential equation. The recent work by \cite{Prabh1} generalized the Mittag-Leffler Levy process and called it as Prabhakar process by extending its Levy measure through Prabhakar function. For more results and applications of Mittag-Leffler function see \cite{Guisti2020}. 

Since only a few studies have been done in the literature to estimate the parameters of the Mittag-Leffler distribution, it is interesting to further explore other methods of estimation. We propose to use empirical Laplace method to estimate the parameters of ML($\alpha,\sigma$) and ML($\alpha,\sigma, \gamma$). The empirical Laplace transform is one of the widely used tools for estimation. \cite{feigin1983} extended the Laplace transform estimation to weighted area technique to estimate the parameters of multi-stage models. Further, in \cite{Gawronski}, the alternative method based on empirical Laplace transform was introduced to estimate the parameters of distributions with regularly varying  tails and also established the results on the consistency of the estimators. \cite{Csorgo1990} used the empirical version of Laplace transform to derive the estimators of some of the compound distributions. Weighted integral involving empirical Laplace transform is also used in the goodness-of-fit test of Rayleigh distribution \cite{Meintanis2003}. We see that the empirical Laplace transform is reliable technique for the estimation purpose.

In the second part of the article, we define AR($1$) model with the ML($\alpha, 1$) distribution as marginals and later we assume that innovation terms are from ML($\alpha,1$) distribution. The study of AR($1$) model using ML($\alpha, 1$) will be interesting as it is heavy-tailed distribution which can easily incorporate the extreme or sudden events present in data. We compute the Laplace transform of the innovation terms for AR($1$) model and use the empirical Laplace transform method to estimate the model parameters $\alpha$ and $\rho$. Many AR($1$) models have been defined with non-Gaussian innovation terms to deal with different times series data for e.g., refer article \cite{Grunwald1996}. It is important to highlight that AR($1$) model with the non-Gaussian distributions are very well explored in literature as it is one of the most simple models in time series analysis. \cite{Sim1990} studied an AR($1$) model employing a Gamma process as the innovation term. For AR models with residuals that conform to a Student's t-distribution, one can refer to publications such as \cite{Christmas2011, Nduka2018}, and the associated references. \cite{Nduka2018} also successfully applied the EM algorithm on AR model with t-distributed innovations.  \par
Rest of the paper is organised as follows. In Section \ref{sec:ML1}, we discuss some of the properties of ML($\alpha, \sigma$) distribution. Subsection \ref{sec:ML2} provides the empirical Laplace transform method for the estimation of ML($\alpha, \sigma$) distribution along with assessment using simulations. Further, we also present a real life application of ML($\alpha, \sigma$) on oil futures high frequency trading data. In Section \ref{sec:ML4}, we extend the use of empirical Laplace transform to estimate the parameters of three-parameter Mittag-Leffler distribution, also known as Prabhakar distribution. Section \ref{sec:ML3} deals with the autoregressive model of order $1$ with ML($\alpha,1$) marginals. We also discuss the the case with innovation terms from ML($\alpha,1$) using Laplace transform. Further we also performed simulation study for AR($1$) model for two different values of parameters $\alpha$ and $\rho$.
Section \ref{sec:ML5} concludes the work.

\section{Mittag-Leffler ($\alpha, \sigma$) distribution}\label{sec:ML1}
The ML$(\alpha,\sigma)$ distribution is widely known as the generalization of exponential distribution and reduces to it for $\alpha=1$. \\

\begin{definition}\label{def:ML}
	The random variable $M$ on ($0,\infty$) is said to be distributed as ML$(\alpha, \sigma)$ if the Laplace transform of $M$ has the form $\phi_M(s) = \dfrac{1}{1+\sigma^{\alpha} s^{\alpha}}$, where $0<\alpha \leq 1$ is known as index of stability, $\sigma$ as scale parameter and $s>0$ \cite{Kozu}. 		
\end{definition}
\vspace{0.4cm}
\noindent In particular for $\alpha=1$ in Def. \ref{def:ML}, we obtain the Laplace transform of exponential distribution with mean $\sigma$. The ML$(\alpha, \sigma)$ is also related to geometric stable distribution as ML$(\alpha, \sigma) = $ GS $(\sigma[\cos{\pi\alpha/2}]^{1/\alpha}, 1, 0)$ for $0<\alpha<1.$ Also, note that, $\sigma M \sim$ ML$(\alpha, \sigma)$ if $M \sim$ ML$(\alpha, 1).$ In this work, we initially focus on the one-parameter ML$(\alpha, 1)$ distribution, followed by an investigation into the two-parameter ML$(\alpha, \sigma)$ distribution.
The notable mixture representation of ML$(\alpha, \sigma)$ as defined in \cite{Kozu} is, 
\begin{align}
	M = \sigma X S^{1/\alpha}, \label{mix_rep}
\end{align} where $X$ is standard exponential and $S$ has the distribution function as, $$F_S(x) = 1 + \dfrac{1}{\pi\alpha}\Bigl[\arctan\Bigl(\dfrac{x}{\sin(\pi\alpha)} + \cot(\pi\alpha)\Bigr) -\dfrac{\pi}{2}\Bigr].$$ For $0<\alpha<1$, \cite{Kozu} proposed the integral representations of the density function and distribution function of ML$(\alpha,1)$ as,
\begin{align}
	f_M(x) &= \dfrac{\sin \pi\alpha}{\pi} \int_{0}^{\infty} \dfrac{y^{\alpha}\exp{(-xy)}}{y^{2\alpha}+1+2y^{\alpha}\cos{\pi\alpha}} \,dy, \,\,\, x>0,\\
	F_M(x) &= 1 - \dfrac{\sin \pi\alpha}{\pi}\int_{0}^{\infty} \dfrac{y^{\alpha-1}\exp{(-xy)}}{y^{2\alpha}+1+2y^{\alpha}\cos{\pi\alpha}} \,dy, \,\,\, x>0,
\end{align}
respectively. For more details on the properties of ML$(\alpha, 1)$, refer the review article by \cite{suresh2003}.
\subsection{Estimation of ML($\alpha, \sigma$) distribution}\label{sec:ML2}
In this subsection, we employ the empirical Laplace transform method to estimate the $\alpha$ and $\sigma$ parameters of ML($\alpha, \sigma$).
The Laplace transform for ML($\alpha, \sigma$) with distribution function as $F_M(x)$ is \cite{Kozu} $$\phi_M(s) = \mathbb{E}[e^{-sM}] = \int_0^{\infty}e^{-sM}\, dF(x) = \dfrac{1}{1+ (\sigma s)^{\alpha}}$$ and the empirical Laplace transform for iid samples $\{X_i\}, i=1,2,\hdots n$ as defined in \cite{Csorgo} is, $$\phi_n(s) = \frac{1}{n}\sum_{i=1}^{n}\exp(-s X_i) = \int_0^{\infty} e^{-sX}\, dF_n(x), \text{ where } dF_n(x) = \frac{1}{n}\#\{1 \leq i\leq n, X_i\leq x\}.$$ 

\begin{figure}[H]
	\centering
	\subfigure[Boxplots for different values of $\alpha$.]{\includegraphics[width=0.48\textwidth]{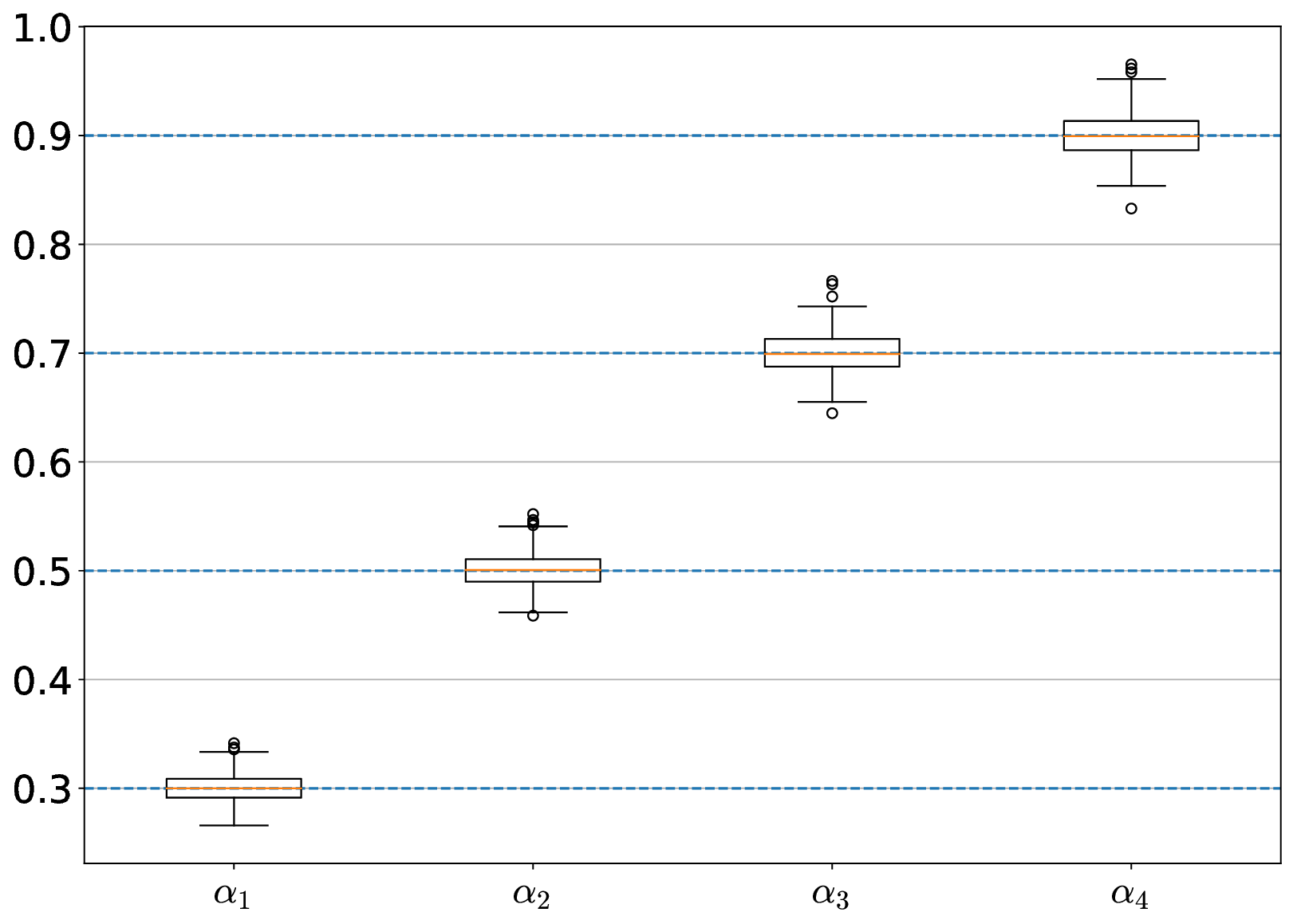}} \hfill
	\subfigure[Boxplots for different values of $\sigma$.]{\includegraphics[width=0.48\textwidth]{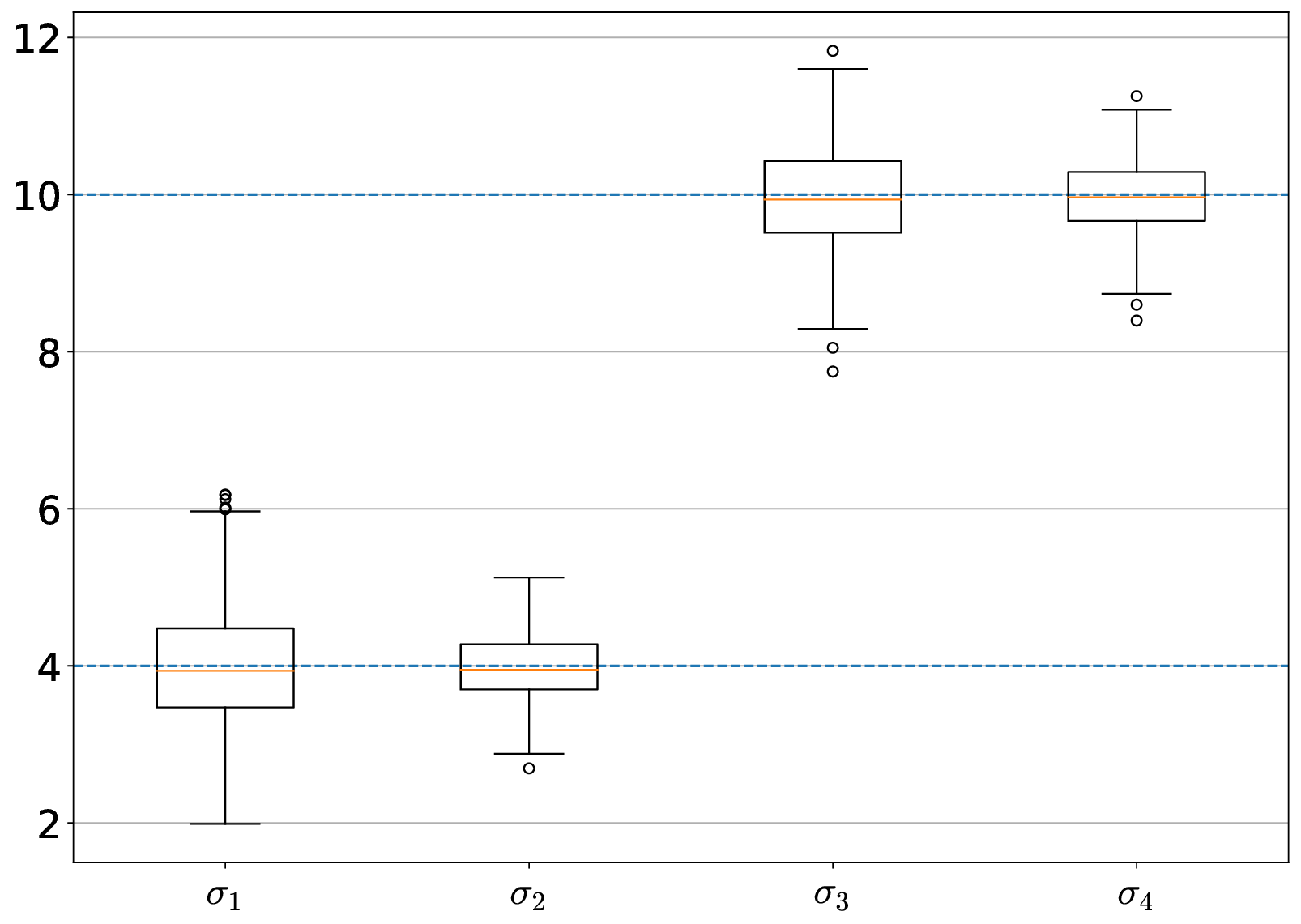}} \hfill
	\caption{The boxplots of ML($\alpha,\sigma$) distribution with true parameters $(\alpha_1, \sigma_1) =(0.3,4)$, ($\alpha_2, \sigma_2)= (0.5, 4)$, ($\alpha_3, \sigma_3)= (0.7, 10)$ and ($\alpha_4, \sigma_4)= (0.9, 10)$. The simulated data has $500$ trajectories from ML($\alpha,\sigma$) each of length $1000.$ }\label{fig6}
\end{figure} 
\begin{table}[ht!]
	\caption{Parameter estimates for four different simulated data.}\label{tab5}
	
	\begin{tabular}{lllll}
		\hline\noalign{\smallskip}
		& $(\alpha_1, \sigma_1) $ & $(\alpha_2, \sigma_2)$ & $(\alpha_3, \sigma_3)$ & $(\alpha_4, \sigma_4)$ \\
		\noalign{\smallskip}\hline\noalign{\smallskip}
		True values & $(0.3, 4)$ & $(0.5, 4)$ & $(0.7, 10)$ & $(0.9, 10)$ \\ 
		Est. values &  $(0.3004,4.0261)$ & $(0.5005, 3.9935)$ & $(0.7001, 9.9744)$ & $(0.8997, 9.9764)$\\
		\noalign{\smallskip}\hline
	\end{tabular}
\end{table}
\begin{table}[ht!]
	\caption{RMSE and MAE of estimated parameters for four simulated data.}\label{tab6}
	
	\begin{tabular}{lllll}
		\hline\noalign{\smallskip}
		& $(\alpha_1, \sigma_1) $ & $(\alpha_2, \sigma_2)$ & $(\alpha_3, \sigma_3)$ & $(\alpha_4, \sigma_4)$ \\
		\noalign{\smallskip}\hline\noalign{\smallskip}
		RMSE & $(0.0126, 0.7625)$ & $(0.0154, 0.4134)$ & $(0.0183, 0.6681)$ & $(0.0198, 0.4583)$ \\ 
		MAE &  $(0.0101, 0.6071)$ & $(0.0124, 0.3330)$ & $(0.0147, 0.5379)$ & $(0.0160, 0.3682)$\\
		
		\noalign{\smallskip}\hline
	\end{tabular}
\end{table}
We simulate four different data sets using the above discussed method and apply the empirical Laplace method for estimation. The true values and estimated values are tabulated in Table \ref{tab5}. Further, root mean squared error (RMSE) and mean absolute error (MAE) for estimated parameters are given in Table \ref{tab6}. The numerical results imply that the accuracy of estimation method is good. The boxplots for all the four sets are shown in Fig. \ref{fig6} which show that all the estimates are accumulated near the true values of parameters.

\subsection{Real data application} 
To exemplify the application of ML($\alpha,\sigma$) distribution, we consider the data set consisting of oil futures high frequency trading (HFT) data. The HFT data for oil futures is collected from July 9, 2007, to August 9, 2007, comprising a total of $951,091$ observations captured during market opening hours. Inter-arrival times, measured in seconds, have been recorded for both up-ticks and down-ticks. In Poisson process, the inter-arrival time is assumed to be exponentially distributed, whereas for fractional Poisson process the inter-arrival time is from ML($\alpha, \sigma$) distribution. We apply the empirical Laplace transform method and obtain the estimated parameters for positive jumps as $\hat{\alpha} = 0.6463, \hat{\sigma} = 0.5640$ and for negative jumps as $\hat{\alpha} = 0.6269, \hat{\sigma} = 0.5137.$ 
\begin{figure}[H]
	\centering
	\subfigure[Log survival function for positive jumps.]{\includegraphics[width=0.49\textwidth]{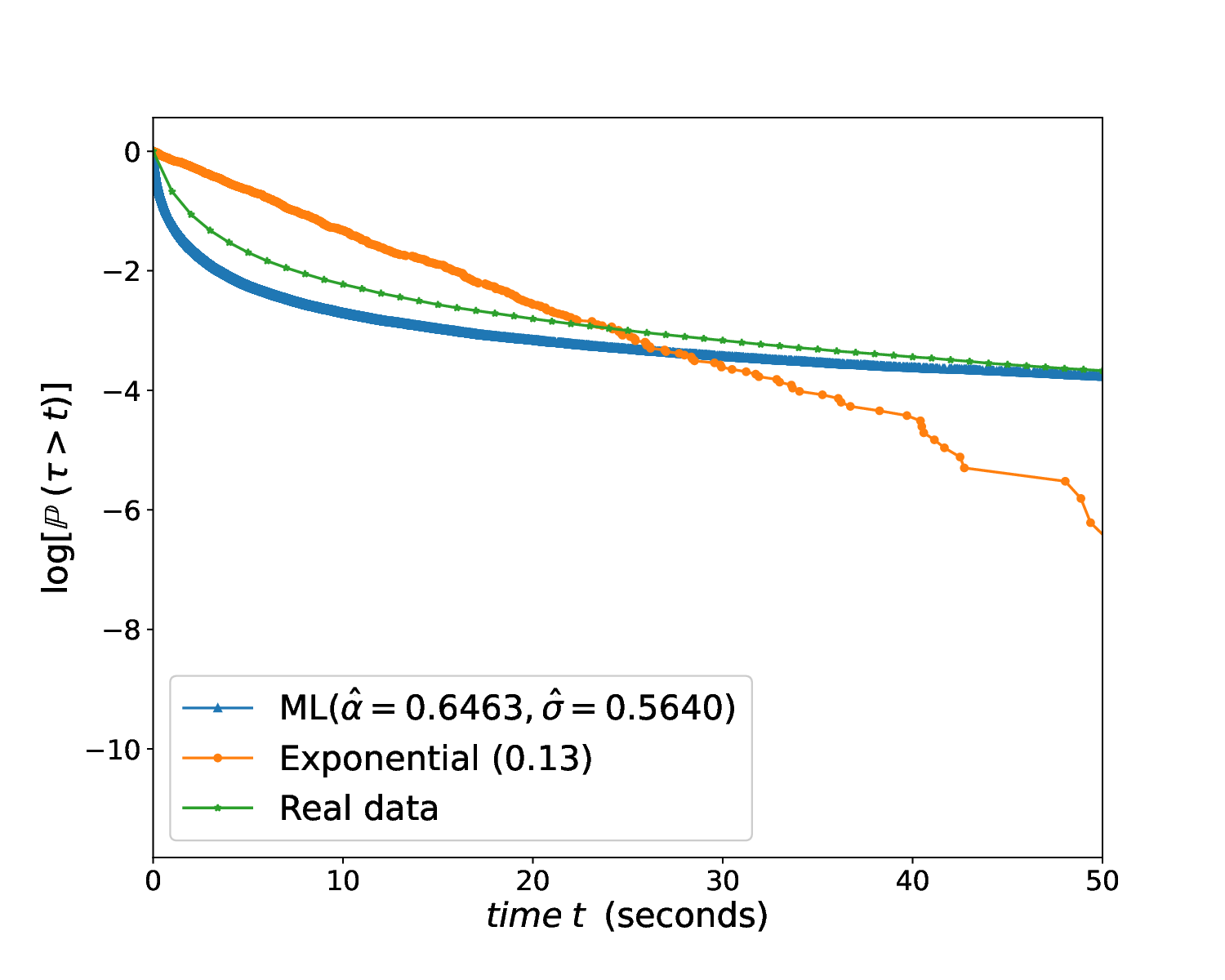}}\hfill
	\subfigure[Log survival function for negative jumps.]{\includegraphics[width=0.49\textwidth]{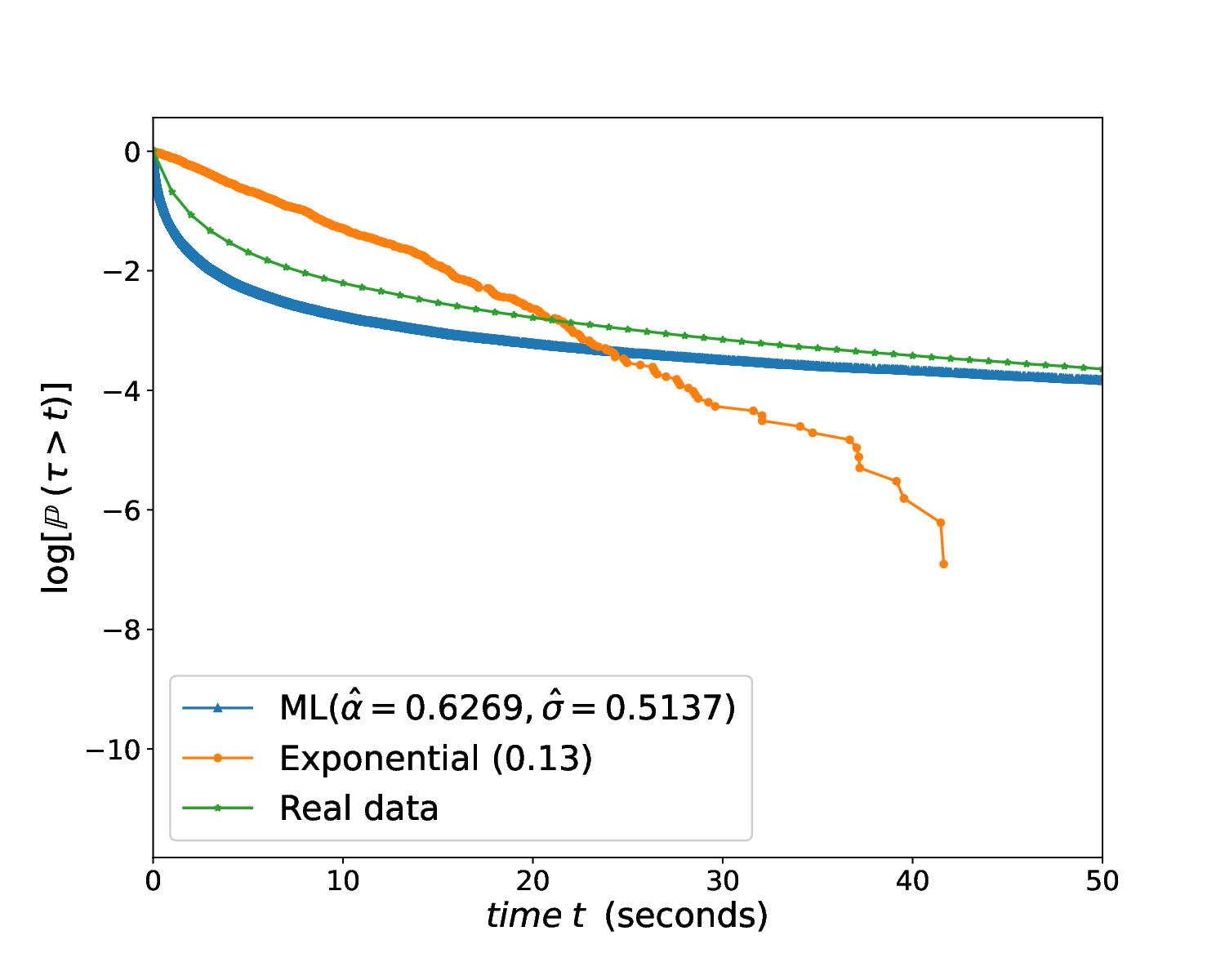}} 
	\caption{Survival functions for oil futures high frequency trading data compared with the corresponding exponential distribution's survival functions.}\label{fig:ML_HFT}
\end{figure}

The Fig. \ref{fig:ML_HFT} illustrates the logarithmic-scale survival function of inter-arrival times in the data, compared with the exponential distribution. We observe that the data is not aligned with exponential inter-arrival time, as is the case for the Poisson process. It is thus evident that, arrival times of the data is very well modeled by ML($\alpha, \sigma$) distribution. Recently, the same behavior was empirically observed by \cite{Arun2020}, while studying the fractional Poisson process.

Next section deals with the estimation of three-parameter Mittag-Leffler distribution using empirical Laplace transform.

\section{Mittag-Leffler($\alpha, \sigma,\gamma$) or Prabhakar distribution}\label{sec:ML4}
The Mittag-Leffler function is widely regarded as the leading function in fractional calculus. Being the generalized version of Mittag-Leffler function, the distribution using Prabhakar function will provide greater flexibility in modeling complex systems. The study of Prabhakar distribution or Prabhakar processes will open new avenues for developing probabilistic models that better capture real-world systems with long-range dependence, heavy tails, or power-law behavior.
The three parameter Mittag-Leffler function, also known as Prabhakar function has the following infinite series representation \cite{Prabh1}:
\begin{align}
	E^{\gamma}_{\alpha, \sigma}(z) = \dfrac{1}{\Gamma(\gamma)}\sum_{k=0}^{\infty} \dfrac{\Gamma(k+\gamma)z^k}{k!\Gamma(\alpha k+\sigma)},\,\,\, z\in \mathbb{C}, \,\alpha, \sigma, \gamma \in \mathbb{C}.\label{eq_prabh1}
\end{align}
When $\gamma=1$, $E$ reduces to two parameter Mittag-Leffler function, while, for $\sigma=1$ and $\gamma=1$, we obtain the classical Mittag-Leffler function. The Prabhakar Levy process is defined by considering a Lévy measure expressed by the means of the function in Eq. \eqref{eq_prabh1}, for $\alpha \in (0,1], \sigma \in [1, 1+\alpha \gamma]$ and $\gamma>0$. We use the results provided by \cite{Prabh1} on Prabhakar Levy process to study the Prabhakar distribution. One of the important results related to the Laplace exponent of Prabhakar process is as follows:
\vspace{0.4cm}
\begin{result}
	The Laplace exponent of the Prabhakar process reads,
	\begin{align}
		\Psi_{M_{\alpha}^{\sigma, \gamma}}(u) = \dfrac{u^{\alpha\gamma-\sigma+1}\Gamma(\gamma)}{\lambda^{\gamma}\Gamma(\gamma+ \frac{1-\sigma}{\alpha}+1)} {}_2F_1\left(\gamma, \gamma+\frac{1-\sigma}{\alpha}, 1+\gamma \frac{1-\sigma}{\alpha}, -\frac{u^{\alpha}}{\lambda}\right), \label{eq_prabh2}
	\end{align}
	where ${}_2F_1(a,b,c:z)$ is the Gauss hypergeometric function \cite{Prabh1}.
\end{result}
\vspace{0.4cm}
Note that for a subordinator $T(t)$, the Laplace transform is $\mathbb{E}(e^{-sT(t)}) = e^{-t\phi(s)},$ where $\phi(s)$ is called the Laplace exponent of the subordinator. For Prabhakar distribution, we find the Laplace transform using the Laplace exponent from Eq. \eqref{eq_prabh2} by substituting $\lambda=1$. We obtain the following result:
\vspace{0.4cm}
\begin{corollary}
	The Laplace transform of Prabhakar distribution $M_{\alpha}^{\sigma, \gamma}$ is,
	\begin{align}\label{eq_prabh3}
		\phi_{M}(s) = \exp\left(- \frac{s^{\alpha\gamma-\sigma+1}\Gamma(\gamma)}{\Gamma(\gamma+1+\frac{1-\sigma}{\alpha})} {}_2F_1\left(\gamma, \gamma+\frac{1-\sigma}{\alpha}, 1+\gamma \frac{1-\sigma}{\alpha}, -s^{\alpha}\right)\right),
	\end{align}
	where ${}_2F_1(a,b,c;z)$ is the Gauss hypergeometric function.   
\end{corollary}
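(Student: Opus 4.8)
The plan is to obtain Eq.~\eqref{eq_prabh3} directly from the Laplace-exponent formula in the Result by invoking the standard correspondence between a Lévy subordinator and its time-one marginal law. Concretely, the Prabhakar distribution $M_\alpha^{\sigma,\gamma}$ is the law of the Prabhakar subordinator $\{T(t)\}_{t\ge 0}$ at $t=1$, and for any subordinator with Laplace exponent $\phi$ one has $\mathbb{E}\!\left(e^{-sT(t)}\right)=e^{-t\phi(s)}$; specializing to $t=1$ gives $\phi_M(s)=\mathbb{E}\!\left(e^{-sM_\alpha^{\sigma,\gamma}}\right)=e^{-\phi(s)}$. Thus the whole statement reduces to identifying $\phi(s)$ with the bracketed expression inside the exponential in Eq.~\eqref{eq_prabh3}.

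First I would take $\Psi_{M_\alpha^{\sigma,\gamma}}(u)$ from Eq.~\eqref{eq_prabh2} and set the scale parameter $\lambda=1$, which is the normalization implicit in the one- and two-parameter cases (cf.\ Definition~\ref{def:ML}, where the waiting-time/Lévy mechanism carries no extra scale). Under $\lambda=1$ the prefactor $\lambda^{-\gamma}$ becomes $1$ and the hypergeometric argument $-u^\alpha/\lambda$ becomes $-u^\alpha$; relabelling $u$ as $s$ then produces precisely
\[
\phi(s)=\frac{s^{\alpha\gamma-\sigma+1}\Gamma(\gamma)}{\Gamma\!\left(\gamma+1+\tfrac{1-\sigma}{\alpha}\right)}\,{}_2F_1\!\left(\gamma,\ \gamma+\tfrac{1-\sigma}{\alpha},\ 1+\gamma\tfrac{1-\sigma}{\alpha},\ -s^\alpha\right),
\]
and substituting into $\phi_M(s)=e^{-\phi(s)}$ yields Eq.~\eqref{eq_prabh3}.

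The step I expect to be the main obstacle is justifying that putting $\lambda=1$ is without loss of generality at the level of the distribution — i.e.\ that the parameter $\lambda$ in Eq.~\eqref{eq_prabh2} only rescales the subordinator. I would verify this by tracing $\lambda$ through the Prabhakar Lévy measure of \cite{Prabh1}: the map $\lambda\mapsto 1$ should correspond to a deterministic scaling $T(t)\mapsto c\,T(t)$ with an explicit constant $c=c(\lambda,\alpha,\sigma,\gamma)$, under which the Laplace exponent transforms homogeneously and consistently with Eq.~\eqref{eq_prabh2}, so that the reduced formula still describes a legitimate Prabhakar law with the scale absorbed (analogous to $\sigma M\sim\mathrm{ML}(\alpha,\sigma)$ when $M\sim\mathrm{ML}(\alpha,1)$). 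A secondary technical point is to make sense of the right-hand side of Eq.~\eqref{eq_prabh3} for all $s>0$: since $|-s^\alpha|$ may exceed $1$, the Gauss series need not converge everywhere, so I would interpret ${}_2F_1$ there via its Euler integral representation / analytic continuation, noting that the admissibility conditions $\alpha\in(0,1]$, $\sigma\in[1,1+\alpha\gamma]$, $\gamma>0$ keep the parameters of ${}_2F_1$ in the range where this continuation is well defined. With these two points settled, the identity $\phi_M(s)=e^{-\phi(s)}$ closes the proof.
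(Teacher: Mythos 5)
Your proposal follows essentially the same route as the paper: substitute $\lambda=1$ in the Laplace exponent of Eq.~\eqref{eq_prabh2} and invoke the subordinator relation $\mathbb{E}(e^{-sT(t)})=e^{-t\phi(s)}$ at $t=1$ to obtain Eq.~\eqref{eq_prabh3}. Your additional remarks on why $\lambda=1$ loses no generality and on interpreting ${}_2F_1$ by analytic continuation for $s^\alpha>1$ go beyond the paper's brief justification but do not change the argument.
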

\vspace{0.4cm}
The data is simulated using the Laplace transform defined in Eq. \eqref{eq_prabh3} by the method proposed by \cite{Ridout2009}. We implement the empirical Laplace transform method to estimate the parameters. The simulation study on three different data sets of $100$ trajectories each of length $1000$ is performed. The true parameters values and estimated values are given in Table \ref{tab7}. The RMSE and MAE are provided in Table \ref{tab8}. From the boxplots in Fig. \ref{fig7} and the estimated data in Table \ref{tab7}, it is evident that the method provides the good estimates. Observe that there is slight deviation of the $\gamma$ estimates from the true $\gamma$ values, but the overall results are satisfactory.
\begin{figure}[H]
	\centering
	\subfigure[Boxplots for different values of $\alpha$.]{\includegraphics[width=0.32\textwidth]{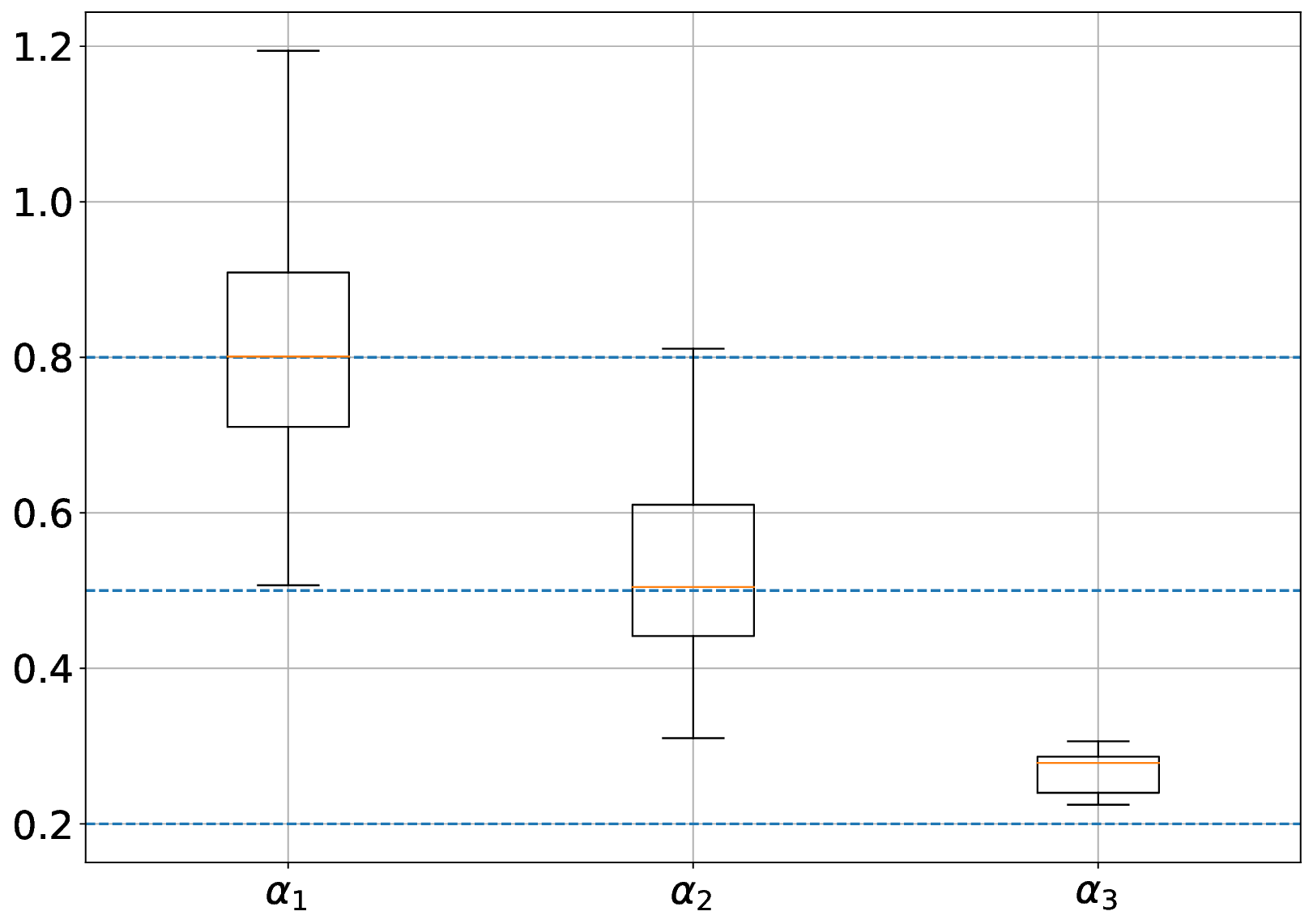}} \hfill
	\subfigure[Boxplots for different values of $\sigma$.]{\includegraphics[width=0.32\textwidth]{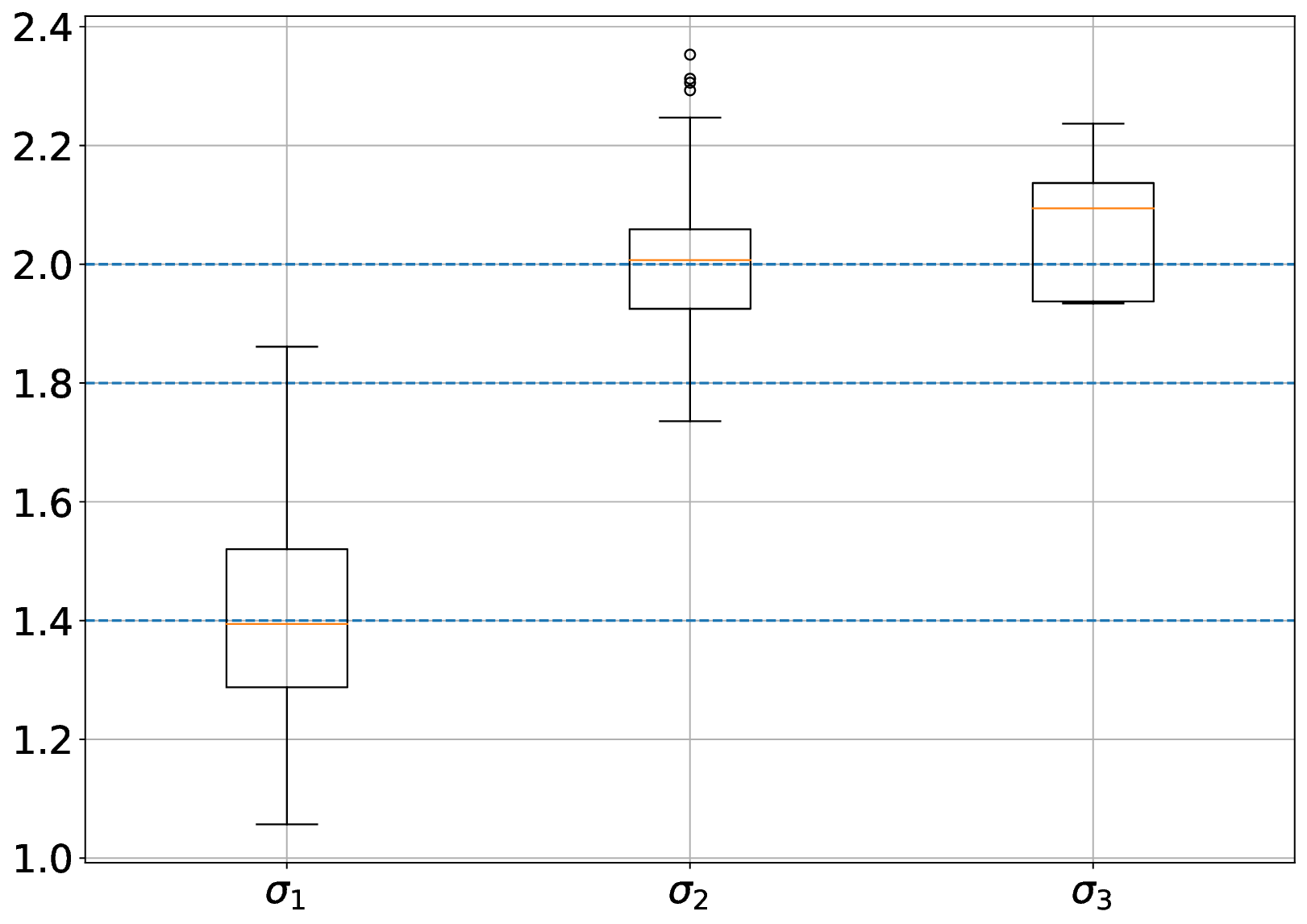}} \hfill
	\subfigure[Boxplots for different values of $\sigma$.]{\includegraphics[width=0.32\textwidth]{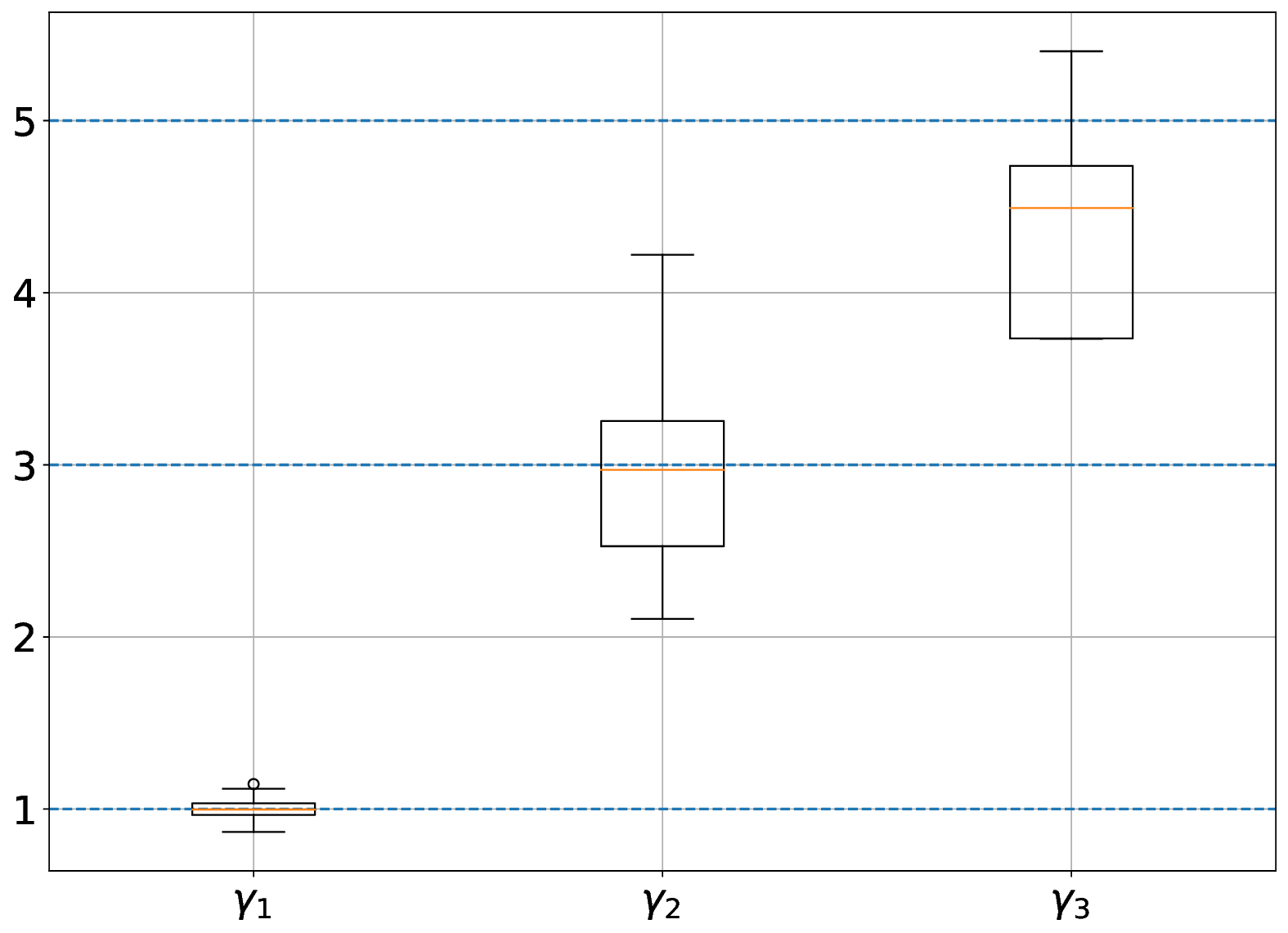}} \hfill
	\caption{The boxplots of ML($\alpha,\sigma, \gamma$) distribution with true parameters $(\alpha_1, \sigma_1, \gamma_1) =(0.8,1.4,1)$, ($\alpha_2, \sigma_2, \gamma_2)= (0.5, 2,3)$, and ($\alpha_3, \sigma_3, \gamma_3)= (0.2,1.8,5)$. The simulated data has $100$ trajectories from ML($\alpha,\sigma, \gamma$) each of length $1000.$ }\label{fig7}
\end{figure} 

\begin{table}[ht!]
	\centering
	\caption{Parameter estimates for three different simulated data.}\label{tab7}
	
	\begin{tabular}{llll}
		\hline{\smallskip}
		& $(\alpha_1, \sigma_1, \gamma_1) $ & $(\alpha_2, \sigma_2, \gamma_2)$ & $(\alpha_3, \sigma_3,\gamma_3)$ \\
		\noalign{\smallskip}\hline\noalign{\smallskip}
		True values & $(0.8, 1.4, 1)$ & $(0.5, 2, 3)$ & $(0.2, 1.8, 5)$\\ 
		Est. values &  $(0.809,1.411, 0.998)$ & $(0.523, 2.002, 2.955)$ & $(0.269, 2.068, 4.392)$ \\
		\noalign{\smallskip}\hline
	\end{tabular}
\end{table}
\begin{table}[ht!]
	\centering
	\caption{RMSE and MAE of estimated parameters for three simulated data.}\label{tab8}
	
	\begin{tabular}{llll}
		\hline{\smallskip}
		& $(\alpha_1, \sigma_1, \gamma_1) $ & $(\alpha_2, \sigma_2, \gamma_2)$ & $(\alpha_3, \sigma_3, \gamma_3)$ \\
		\noalign{\smallskip}\hline\noalign{\smallskip}
		RMSE & $(0.135, 0.160, 0.049)$ & $(0.116, 0.129,0.542)$ & $(0.073, 0.283, 0.764)$  \\ 
		MAE &  $(0.110, 0.129, 0.039)$ & $(0.094, 0.099, 0.444)$ & $(0.069, 0.269,0.626)$ \\
		
		\noalign{\smallskip}\hline
	\end{tabular}
\end{table}

In the next section, we study the autoregressive model of order $1$ (AR$(1)$) with ML distribution and also obtain the integral form of probability density function of the innovation terms.
\section{Autoregressive models}\label{sec:ML3}
Consider the stationary first order autoregressive model (AR($1$)) as,
\begin{align}
	Y_t = \rho Y_{t-1} + \epsilon_t,\label{ARML}
\end{align} 
where $|\rho|<1, \, t = 1,2,\hdots, n.$ In the following result we study the distribution of innovation terms and marginal distribution using the Laplace transform $\phi_Y(s) = \dfrac{1}{1+s^\alpha}.$
\begin{proposition}
	Consider the stationary AR$(1)$ model defined in Eq. \eqref{ARML}, with the marginals $\{Y_t\}$ from ML$(\alpha,1)$ then the innovation terms $\{\epsilon_t\}$ have distribution with Laplace transform $\phi_\epsilon(s) = \dfrac{1+(\rho s)^{\alpha}}{1+s^{\alpha}}$. If the innovation terms are from ML$(\alpha,1)$ distribution, then the marginal has Laplace transform $\phi_Y(s) = \displaystyle\dfrac{1}{\prod_{i=0}^\infty(1+(\rho^i s)^\alpha)}.$
\end{proposition}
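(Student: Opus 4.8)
The plan is to argue entirely at the level of Laplace transforms, exploiting the fact that the AR$(1)$ specification in Eq.~\eqref{ARML} builds in the independence of $\epsilon_t$ and $Y_{t-1}$. For the first claim I would apply $\mathbb{E}[e^{-s\,\cdot\,}]$ to both sides of \eqref{ARML}; since $\epsilon_t$ is independent of $Y_{t-1}$ the transform of the sum factorizes, and stationarity lets me write $\phi_{Y_t}=\phi_{Y_{t-1}}=\phi_Y$, producing the identity $\phi_Y(s)=\phi_Y(\rho s)\,\phi_\epsilon(s)$. Substituting $\phi_Y(s)=1/(1+s^\alpha)$ from Def.~\ref{def:ML} and solving gives $\phi_\epsilon(s)=\phi_Y(s)/\phi_Y(\rho s)=(1+(\rho s)^\alpha)/(1+s^\alpha)$. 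To confirm that this is genuinely the transform of a law on $[0,\infty)$ I would rewrite it as the convex combination $\phi_\epsilon(s)=\rho^\alpha+(1-\rho^\alpha)\,\frac{1}{1+s^\alpha}$, valid for $0\le\rho<1$, which both certifies complete monotonicity and exhibits $\epsilon_t$ as a mixture placing mass $\rho^\alpha$ at the origin and mass $1-\rho^\alpha$ on an ML$(\alpha,1)$ variable.

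For the second claim I would unfold the recursion: iterating \eqref{ARML} gives $Y_t=\rho^{N+1}Y_{t-N-1}+\sum_{i=0}^{N}\rho^i\epsilon_{t-i}$, and the standard theory of AR$(1)$ with $|\rho|<1$ and i.i.d.\ innovations satisfying $\mathbb{E}\log^+\epsilon_0<\infty$ yields the unique stationary solution $Y_t=\sum_{i=0}^{\infty}\rho^i\epsilon_{t-i}$, the series converging almost surely. This moment condition holds here because ML$(\alpha,1)$ has a polynomially decaying tail, $\mathbb{P}(\epsilon_0>x)=O(x^{-\alpha})$, so $\mathbb{E}\log^+\epsilon_0=\int_0^\infty\mathbb{P}(\epsilon_0>e^u)\,du<\infty$. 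With the $\epsilon_{t-i}$ independent and each ML$(\alpha,1)$, the transform of the $N$-th partial sum $S_N$ is $\prod_{i=0}^{N}\phi_\epsilon(\rho^i s)=\prod_{i=0}^{N}1/(1+(\rho^i s)^\alpha)$; since $S_N\uparrow Y_t$, the exponentials $e^{-sS_N}$ decrease to $e^{-sY_t}$ and are bounded by $1$, so dominated convergence gives $\phi_Y(s)=\prod_{i=0}^{\infty}1/(1+(\rho^i s)^\alpha)$. I would also record that the infinite product converges to a nonzero limit because $\sum_{i=0}^{\infty}(\rho^i s)^\alpha=s^\alpha/(1-\rho^\alpha)<\infty$, and that it manifestly satisfies the stationarity functional equation $\phi_Y(s)=\phi_Y(\rho s)/(1+s^\alpha)$, which cross-checks the first part.

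The routine transform algebra is not where the difficulty lies; the one substantive point is the existence of the stationary process and its representation as the almost surely convergent random series $\sum_{i\ge 0}\rho^i\epsilon_{t-i}$ when the innovations are heavy-tailed, after which the infinite-product formula and the interchange of limit and expectation follow softly. For the first part the only care needed is not merely to compute the ratio of transforms but to verify that it is a bona fide Laplace transform, which the mixture decomposition $\phi_\epsilon(s)=\rho^\alpha+(1-\rho^\alpha)/(1+s^\alpha)$ settles immediately.
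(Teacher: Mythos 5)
Your proposal is correct and follows essentially the same route as the paper: the factorization $\phi_Y(s)=\phi_Y(\rho s)\,\phi_\epsilon(s)$ from independence and stationarity for the first claim, and the MA$(\infty)$ expansion $Y_t=\sum_{i\ge 0}\rho^i\epsilon_{t-i}$ with the product of transforms for the second. Your additions — the mixture decomposition $\phi_\epsilon(s)=\rho^\alpha+(1-\rho^\alpha)/(1+s^\alpha)$ certifying that the ratio is a bona fide Laplace transform, and the justification of almost sure convergence of the series and of the passage to the limit in the infinite product — are sound refinements of steps the paper carries out only formally.
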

\begin{proof}
	The Laplace transform of ML$(\alpha,1)$ marginals is $\phi_Y(s) = \dfrac{1}{1+s^\alpha}.$
	Then the Laplace transform of the model will be,
	\begin{align*}
		\phi_Y(s) &= \phi_Y(\rho s) \phi_\epsilon(s)\implies
		\dfrac{1}{1+s^\alpha} = \dfrac{1}{1+(\rho s)^\alpha} \phi_\epsilon(s),\\
		\phi_\epsilon(s) &= \dfrac{1+(\rho s)^\alpha}{1+ s^\alpha}.
	\end{align*}
	
	Now, assume that the innovation terms are from ML$(\alpha,1)$ distribution. We rewrite the stationary AR$(1)$ model in the form of MA$(\infty)$ using backward shift operator $B(Y_t) = Y_{t-1}$ as follows:
	\begin{align*}
		Y_t - \rho Y_{t-1}  &= \epsilon_t \implies (1-\rho B) Y_t = \epsilon_t,\\
		Y_t &= \dfrac{\epsilon_t}{1-\rho B} = \displaystyle\sum_{i=0}^{\infty}\rho^i\epsilon_{t-i}.
	\end{align*}
	We take the Laplace transform on both sides of above equation and obtain the desired result,
	$$\phi_Y(s) = \prod_{i=0}^{\infty}\phi_{\epsilon_{t-i}}(\rho^i s) = \dfrac{1}{\prod_{i=0}^{\infty}(1+(\rho^i s)^\alpha)}.$$
	
\end{proof}

\begin{theorem}
	For stationary AR($1$) model with ML($\alpha,1$) marginals, the probability density function of innovation terms is
	\begin{equation}
		f_\epsilon(x)= \frac{1}{\pi} \int_{0}^{\infty} \frac{ e^{-xy}y^{\alpha}\sin(\pi\alpha)\{1-  \rho^{\alpha}\}}{1+y^{2\alpha} +  2y^{\alpha}\cos(\pi\alpha)} \,dy.\label{AR_pdf}
	\end{equation}
\end{theorem}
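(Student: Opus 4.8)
The plan is to obtain $f_\epsilon$ by inverting the Laplace transform $\phi_\epsilon(s)=\dfrac{1+(\rho s)^{\alpha}}{1+s^{\alpha}}$ derived in the preceding proposition. The first step is the elementary algebraic splitting
\[
\phi_\epsilon(s)=\frac{1+\rho^{\alpha}s^{\alpha}}{1+s^{\alpha}}
=\rho^{\alpha}+(1-\rho^{\alpha})\,\frac{1}{1+s^{\alpha}} ,
\]
which for $0<\alpha\le 1$ and $0<\rho<1$ (so that $0<\rho^{\alpha}<1$) displays the distribution of $\epsilon_t$ as a probability mixture: an atom at the origin of mass $\rho^{\alpha}$ together with an ML$(\alpha,1)$ component of mass $1-\rho^{\alpha}$, the two Laplace transforms being the constant $1$ and $\dfrac{1}{1+s^{\alpha}}$ respectively (Definition~\ref{def:ML}).

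From here I would invoke uniqueness of the Laplace transform together with the integral representation of the ML$(\alpha,1)$ density recalled in Section~\ref{sec:ML1}, namely $f_M(x)=\frac{\sin\pi\alpha}{\pi}\int_{0}^{\infty}\frac{y^{\alpha}e^{-xy}}{y^{2\alpha}+1+2y^{\alpha}\cos\pi\alpha}\,dy$. Since the atom at $0$ contributes nothing on $(0,\infty)$, the absolutely continuous part of $\epsilon_t$ has density $(1-\rho^{\alpha})f_M(x)$ for $x>0$, and this is precisely \eqref{AR_pdf}.

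If one prefers to produce \eqref{AR_pdf} directly, the alternative is Bromwich inversion applied to $\phi_\epsilon(s)-\rho^{\alpha}=(1-\rho^{\alpha})/(1+s^{\alpha})$ (subtracting the constant is needed since $\phi_\epsilon$ does not vanish at infinity). This function is analytic on $\mathbb{C}\setminus(-\infty,0]$, its only feature being the branch cut of $s^{\alpha}$ along the negative real axis; collapsing the inversion contour onto the cut and writing $s=y e^{\pm i\pi}$, so that $s^{\alpha}=y^{\alpha}(\cos\pi\alpha\pm i\sin\pi\alpha)$, one gets $f_\epsilon(x)=\frac{1}{2\pi i}\int_{0}^{\infty}e^{-xy}\bigl[\phi_\epsilon(y e^{-i\pi})-\phi_\epsilon(y e^{i\pi})\bigr]\,dy=-\frac1\pi\int_{0}^{\infty}e^{-xy}\,\Im\,\phi_\epsilon(y e^{i\pi})\,dy$. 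Rationalizing $\phi_\epsilon(y e^{i\pi})$ (multiply top and bottom by $1+y^{\alpha}e^{-i\pi\alpha}$) yields $\Im\,\phi_\epsilon(y e^{i\pi})=-\dfrac{(1-\rho^{\alpha})y^{\alpha}\sin\pi\alpha}{1+y^{2\alpha}+2y^{\alpha}\cos\pi\alpha}$, and substituting this back gives \eqref{AR_pdf}.

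The point that needs the most care is the atom at the origin: because $\phi_\epsilon(s)\to\rho^{\alpha}\neq 0$ as $s\to\infty$, the law of $\epsilon_t$ is not purely absolutely continuous, so \eqref{AR_pdf} is the density of its continuous part only and $\int_{0}^{\infty}f_\epsilon(x)\,dx=1-\rho^{\alpha}<1$. In the contour-integral derivation this is exactly why one must peel off $\rho^{\alpha}$ before deforming the contour (otherwise the large circular arcs do not die off); the remaining technicalities — a Jordan-type bound on $\phi_\epsilon(s)-\rho^{\alpha}$ over the large arcs, vanishing of the small arc around $0$ (where $\phi_\epsilon-\rho^{\alpha}$ stays bounded), and integrability of the jump near $y=0$ (the integrand there is $O(y^{\alpha})$) — are routine, and everything else is bookkeeping with $e^{\pm i\pi\alpha}=\cos\pi\alpha\pm i\sin\pi\alpha$.
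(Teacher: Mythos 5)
Your proof is correct, and it takes a genuinely different (and in one respect sharper) route than the paper. The paper proves the theorem by Bromwich inversion of $g(s)=\frac{1+\rho^{\alpha}s^{\alpha}}{1+s^{\alpha}}$ directly: it runs the keyhole contour of Fig.~\ref{fig5} around the branch cut, asserts that the large arcs $AB$, $EF$ and the small arc $CD$ vanish, and extracts the jump across the cut, which is exactly the computation you sketch in your ``alternative'' paragraph. Your primary argument instead splits $\phi_\epsilon(s)=\rho^{\alpha}+(1-\rho^{\alpha})\frac{1}{1+s^{\alpha}}$ and reads off the law of $\epsilon_t$ as a mixture of a point mass at $0$ (weight $\rho^{\alpha}$) and an ML$(\alpha,1)$ component (weight $1-\rho^{\alpha}$), so that \eqref{AR_pdf} is just $(1-\rho^{\alpha})f_M(x)$ with $f_M$ the integral representation already quoted in Section~\ref{sec:ML1}; this avoids contour integration entirely and uses only uniqueness of Laplace transforms. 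What your route buys, beyond brevity, is the observation the paper's statement and proof gloss over: since $\phi_\epsilon(s)\to\rho^{\alpha}\neq 0$ as $s\to\infty$, the innovation law has an atom at the origin, \eqref{AR_pdf} is the density of the absolutely continuous part only, and $\int_0^\infty f_\epsilon(x)\,dx=1-\rho^{\alpha}<1$. This also pinpoints the soft spot in the paper's contour argument: with the constant $\rho^{\alpha}$ still present, $g(s)$ does not tend to $0$ on the large arcs, so the claim that those arc integrals vanish is not justified as stated; subtracting $\rho^{\alpha}$ first (as you do) both repairs that estimate and produces the delta mass at $0$ that accounts for the missing probability. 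The two derivations agree on the formula itself, since the discarded constant is invisible for $x>0$.
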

\begin{proof}
	The Laplace transform of error terms is $\phi_\epsilon(s) = \dfrac{1+(\rho s)^{\alpha}}{1+s^{\alpha}}$. Consider the function $g(s)= \frac{1+\rho^\alpha s^\alpha}{1+s^\alpha}.$ 
	The branch point for $g(s)$ is $s_1=(0,0)$. We use complex inversion formula to compute the Laplace inverse of $g(s)$ which in turn gives the probability density function $f_\epsilon(x)$ of innovation terms
	$$f_{\epsilon}(x) = \frac{1}{2\pi\iota}\int_{x_0-i\infty}^{x_0+i\infty}e^{sx}g(s)\,ds,$$ where $x_0>a$ is chosen such that the integrand is analytic for $Re(s)>a.$ 
	Consider the contour $ABCDEF$ in Fig. \ref{fig5} with branch point $P=(0,0)$, circular arcs $AB$ and $EF$ of radius $R$, arc $CD$ of radius $r$, line segments $BC$ and $DE$, are parallel to $x-$axis and $AF$ is line segment from $x_0-iy$ to $x_0+iy$. For closed curve $C$, Cauchy residue theorem states that \cite{Schiff1999}, $$\frac{1}{2\pi\iota}\oint_C e^{sx}g(s)\,ds = 0.$$
	\begin{figure}[ht!]
		\centering
		\includegraphics[scale=0.65]{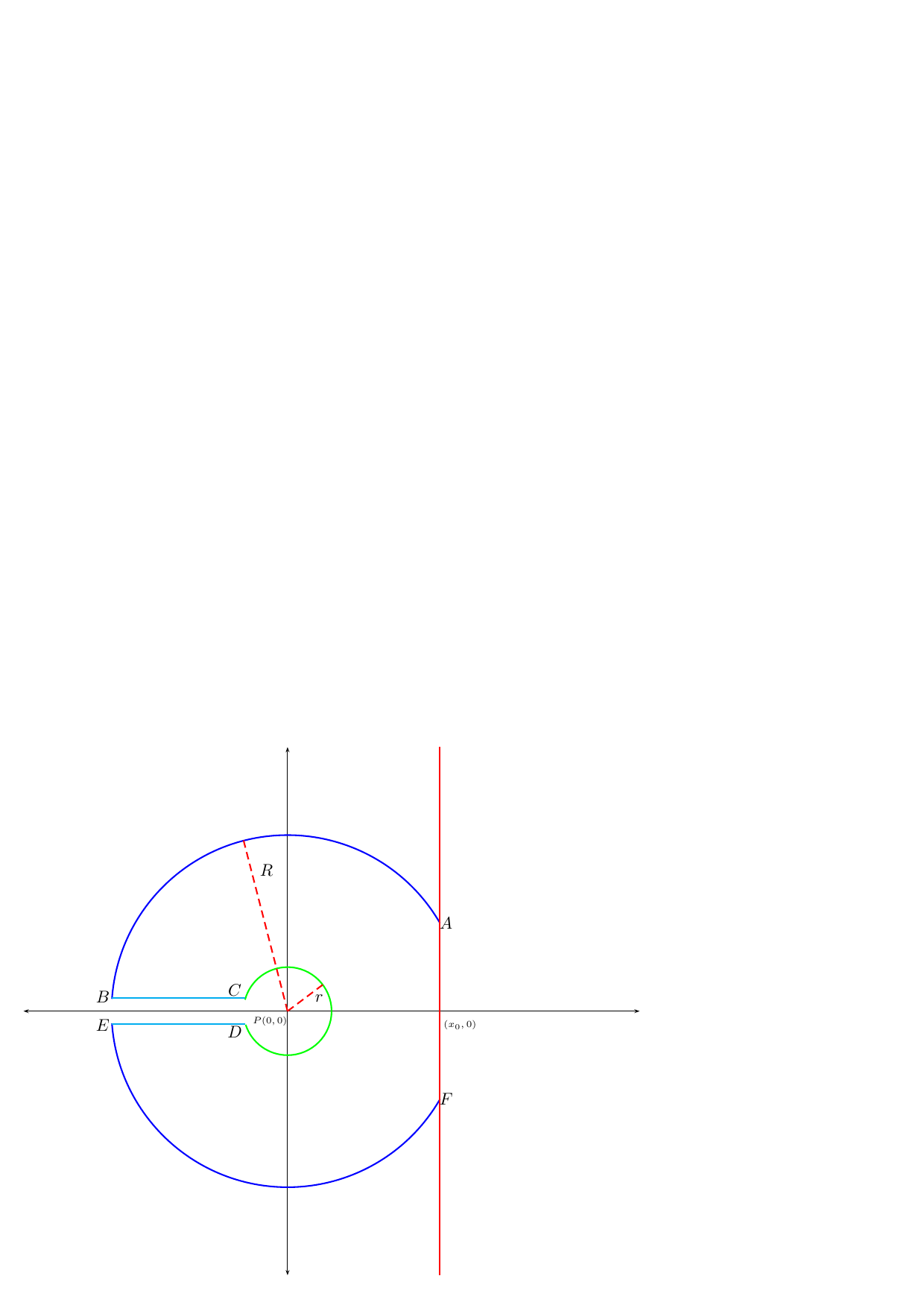}
		\caption{Contour plot with branch point at $O=(0,0)$ in anti-clockwise direction. }
		\label{fig5}
	\end{figure}
	In limiting case, for contour $ABCDEF$, the integral on circular arcs $AB$ and $EF$ tend to $0$ as $R\to \infty.$ The integral over $CD$ also tends to $0$ as $r\to 0.$ We need to compute the following integral: 
	\begin{align}
		\frac{1}{2\pi\iota}\int_{x_0-\iota \infty}^{x_0+\iota\infty} e^{sx}g(s)\,ds &= - \frac{1}{2\pi\iota}\int_{BC}e^{sx}g(s)\,ds - \frac{1}{2\pi\iota}\int_{DE}e^{sx}g(s)\,ds.\label{int_eq1}
	\end{align}

	\noindent For $BC$, let $s=ye^{\iota\pi}$, then $ds=-dy$,
	\begin{align*}
		\int_{BC}\frac{e^{sx}(1+ \rho^\alpha s^\alpha)}{1+s^\alpha}\,ds &= \int_{-R}^{-r}\frac{e^{sx}(1+ \rho^\alpha s^\alpha)}{1+s^\alpha}\,ds\\
		&= -\int_{R}^{r}\frac{e^{yx}(1+ \rho^\alpha y^\alpha e^{\iota\pi\alpha})}{1+y^\alpha e^{\iota\pi\alpha}}\,dy\\
		&= \int_{r}^{R}\frac{e^{yx}(1+ \rho^\alpha y^\alpha e^{\iota\pi\alpha})}{1+y^\alpha e^{\iota\pi\alpha}}\,dy.
	\end{align*}
	For $DE$, let $s=ye^{-\iota\pi}$, then $ds=-dy$.
	\begin{align*}
		\int_{DE}\frac{e^{sx}(1+ \rho^\alpha s^\alpha)}{1+s^\alpha}\,ds &= \int_{-r}^{-R}\frac{e^{sx}(1+ \rho^\alpha s^\alpha)}{1+s^\alpha}\,ds\\
		&= -\int_{r}^{R}\frac{e^{yx}(1+ \rho^\alpha y^\alpha e^{-\iota\pi\alpha})}{1+y^\alpha e^{-\iota\pi\alpha}}\,dy.
	\end{align*}
	Now, combining both integrals, we get
	\begin{align*}
		\frac{1}{2\pi\iota}\Biggl[\displaystyle\int_{BC}\frac{e^{sx}(1+ \rho^\alpha s^\alpha)}{1+s^\alpha}\,ds &- \displaystyle\int_{DE}\frac{e^{sx}(1+ \rho^\alpha s^\alpha)}{1+s^\alpha}\,ds\Biggr] \\
		&=\frac{1}{2\pi\iota}\int_{r}^{R}{e^{-xy}}\Biggl\{\dfrac{1+\rho^\alpha y^\alpha e^{\iota\pi\alpha}}{1+ y^\alpha e^{\iota\pi\alpha}} - \dfrac{1+\rho^\alpha y^\alpha e^{-\iota\pi\alpha}}{1+ y^\alpha e^{-\iota\pi\alpha}}\Biggl\}\,ds\\ 
		&= \frac{1}{2\pi\iota}\int_{r}^{R}\dfrac{e^{-xy}\{y^\alpha(e^{-\iota\pi\alpha} - e^{\iota\pi\alpha}) + \rho^{\alpha}y^\alpha(e^{\iota\pi\alpha}-e^{-\iota\pi\alpha})\}}{1+ y^{2\alpha} + y^\alpha(e^{\iota\pi\alpha} + e^{-\iota\pi\alpha})}\,dy\\
		&= \frac{1}{2\pi\iota}\int_{r}^{R}\dfrac{e^{-xy}\{2\iota\sin(\pi\alpha)y^\alpha \rho^\alpha - 2\iota \sin(\pi\alpha)y^\alpha\}}{1+y^{2\alpha} + 2\cos(\pi\alpha)y^\alpha}\,dy\\
		&= -\frac{1}{\pi}\int_{r}^{R}\dfrac{e^{-xy}(1-\rho^\alpha)(y^\alpha\sin(\pi\alpha))}{1+y^{2\alpha} + 2\cos(\pi\alpha)y^\alpha} \,dy.\label{int_eq_5}
	\end{align*}  
	Take limits $r\to 0, R\to\infty$ and we get the desired result.
\end{proof}
\vspace{0.5cm}
Time reversibility property of the model helps in finding the hidden patterns in the data. We obtain that the model is not time reversibile in the result below. 
\vspace{0.5cm}
\begin{proposition}
	The AR($1$) model defined in Eq. \eqref{ARML} with marginals $\{Y_t\}$ from ML$(\alpha,1)$ is not time reversible.
\end{proposition}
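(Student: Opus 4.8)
The plan is to reduce time reversibility of $\{Y_t\}$ to a symmetry property of the bivariate Laplace transform of the pair $(Y_{t-1},Y_t)$, and then to exhibit an asymmetry. Since $\{Y_t\}$ is a stationary Markov chain, it is time reversible if and only if it satisfies detailed balance, which is precisely the statement that $(Y_{t-1},Y_t)$ is exchangeable, i.e. $(Y_{t-1},Y_t)$ and $(Y_t,Y_{t-1})$ have the same law; equivalently, the joint Laplace transform $\Phi(s_1,s_2):=\mathbb{E}\big[e^{-s_1Y_{t-1}-s_2Y_t}\big]$ satisfies $\Phi(s_1,s_2)=\Phi(s_2,s_1)$ for all $s_1,s_2\ge 0$. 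So the first step is to compute $\Phi$. Using $Y_t=\rho Y_{t-1}+\epsilon_t$ with $\epsilon_t$ independent of $Y_{t-1}$, together with $\phi_Y(s)=1/(1+s^\alpha)$ and $\phi_\epsilon(s)=(1+(\rho s)^\alpha)/(1+s^\alpha)$ from the preceding Proposition,
\[
\Phi(s_1,s_2)=\mathbb{E}\big[e^{-(s_1+\rho s_2)Y_{t-1}}\big]\,\mathbb{E}\big[e^{-s_2\epsilon_t}\big]=\phi_Y(s_1+\rho s_2)\,\phi_\epsilon(s_2)=\frac{1+(\rho s_2)^\alpha}{\big(1+(s_1+\rho s_2)^\alpha\big)\big(1+s_2^\alpha\big)}.
\]

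The second step is to show that this is not symmetric in $(s_1,s_2)$. The model forces $0\le\rho<1$ for $\epsilon_t$ to be a genuine nonnegative random variable, and $\rho=0$ gives the trivially reversible i.i.d. case, so I take $0<\rho<1$. Rather than plugging in specific numbers, I would argue by a tail asymptotic: fix $s_2>0$ and let $s_1\to\infty$. Since $(s_1+\rho s_2)^\alpha\sim s_1^\alpha$, we get $s_1^\alpha\,\Phi(s_1,s_2)\to (1+(\rho s_2)^\alpha)/(1+s_2^\alpha)$. In $\Phi(s_2,s_1)$, on the other hand, the factors $1+(\rho s_1)^\alpha$, $1+(s_2+\rho s_1)^\alpha$ and $1+s_1^\alpha$ are asymptotic to $\rho^\alpha s_1^\alpha$, $\rho^\alpha s_1^\alpha$ and $s_1^\alpha$ respectively, so $s_1^\alpha\,\Phi(s_2,s_1)\to 1$. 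Because $\rho<1$ forces $(1+(\rho s_2)^\alpha)/(1+s_2^\alpha)<1$, the two limits disagree, hence $\Phi(s_1,s_2)\neq\Phi(s_2,s_1)$ for $s_1$ large, and the process is not time reversible.

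The only point requiring care is the equivalence "time reversibility $\Longleftrightarrow$ exchangeability of $(Y_{t-1},Y_t)$"; for a stationary Markov chain this is the classical detailed-balance characterization (reversing a stationary Markov chain yields a Markov chain, and it has the same transition kernel iff the two-step joint law is symmetric), which I would simply cite. Everything else is a one-line Laplace-transform identity plus an elementary asymptotic comparison. As a sanity check, the argument covers every $\rho\in(0,1)$ and every index $\alpha\in(0,1]$, and at $\alpha=1$ it recovers the well-known non-reversibility of the exponential AR$(1)$ process.
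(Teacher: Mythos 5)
Your proposal is correct and follows essentially the same route as the paper: compute the joint Laplace transform $\mathbb{E}\bigl[e^{-s_1Y_{t-1}-s_2Y_t}\bigr]=\dfrac{1+(\rho s_2)^\alpha}{(1+s_2^\alpha)\bigl(1+(s_1+\rho s_2)^\alpha\bigr)}$ and conclude non-reversibility from its asymmetry in $(s_1,s_2)$. The only difference is that you substantiate the asymmetry explicitly (via the $s_1\to\infty$ asymptotics, and noting that only the easy direction ``reversibility $\Rightarrow$ exchangeability of the pair'' is needed), whereas the paper simply asserts that the transform is not symmetric.
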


\begin{proof}
	For the model defined in Eq. \eqref{ARML}, the joint Laplace transform is,
	\begin{align*}
		\phi_{Y_{t-1}, Y_{t}}(s_1,s_2) &= \mathbb{E}(\exp{-(s_1Y_{t-1}+s_2Y_{t})})= \mathbb{E}(\exp{-(s_1Y_{t-1}+s_2(\rho Y_{t-1}+\epsilon_t)}))\\
		&= \mathbb{E}(\exp{-(s_1+\rho s_2)Y_{t-1})} \mathbb{E}(\exp{-s_2\epsilon_t})\\
		&= \dfrac{1+(\rho s_2)^\alpha}{(1+s_2^\alpha)(1+(s_1+\rho s_2)^\alpha)}.
	\end{align*}
	The joint Laplace transform is not a symmetric function which imply that the AR($1$) model is not time reversible.
\end{proof}
\subsection{Simulation study}
This subsection deals with the assessment and simulation of the proposed time series AR$(1)$ model. We perform simulation for AR$(1)$ model with marginals from $ML(\alpha, 1)$ which results in innovation terms $\{\epsilon_t\}$ from distribution with Laplace transform $\frac{1+(\rho s)^\alpha}{1+ s^\alpha}$. We use the method proposed by \cite{Ridout2009} to simulate the $N = 500$ iid trajectories of $\{\epsilon_t\}$ each of length $n=1000$. 
We simulate two sets of innovation terms namely, data $1$ with true parameter values $\alpha_1 = 0.4, \rho_1=0.4$ and data $2$ with values $\alpha_2 = 0.6, \rho_2 = 0.8$. Then we generate the $Y_t$ series using relation defined in Eq. \eqref{ARML}. The times series plot and innovation terms for data $1$ and data $2$ are shown in Fig. \ref{fig3}. We apply the empirical Laplace transform to estimate the parameters of AR$(1)$ model. The average of the estimated values of $500$ trajectories are given in Table \ref{tab3}. The RMSE and MAE of the estimation method are in Table \ref{tab4}. The method has less RMSE and MAE and provide the good estimates. The boxplots for the estimates of both the cases are shown in Fig. \ref{fig4} which indicates that the median of the estimates are close to true parameter values. Also the variance of the estimates is less. We can conclude that the empirical Laplace transform method for both the cases perform well on AR$(1)$ model with ML $(\alpha, 1)$ marginals.

\begin{figure}[H]
	\centering
	\subfigure[$Y_t$ for $\alpha_1 = 0.4,\rho_1 = 0.4.$]{\includegraphics[width=0.49\textwidth]{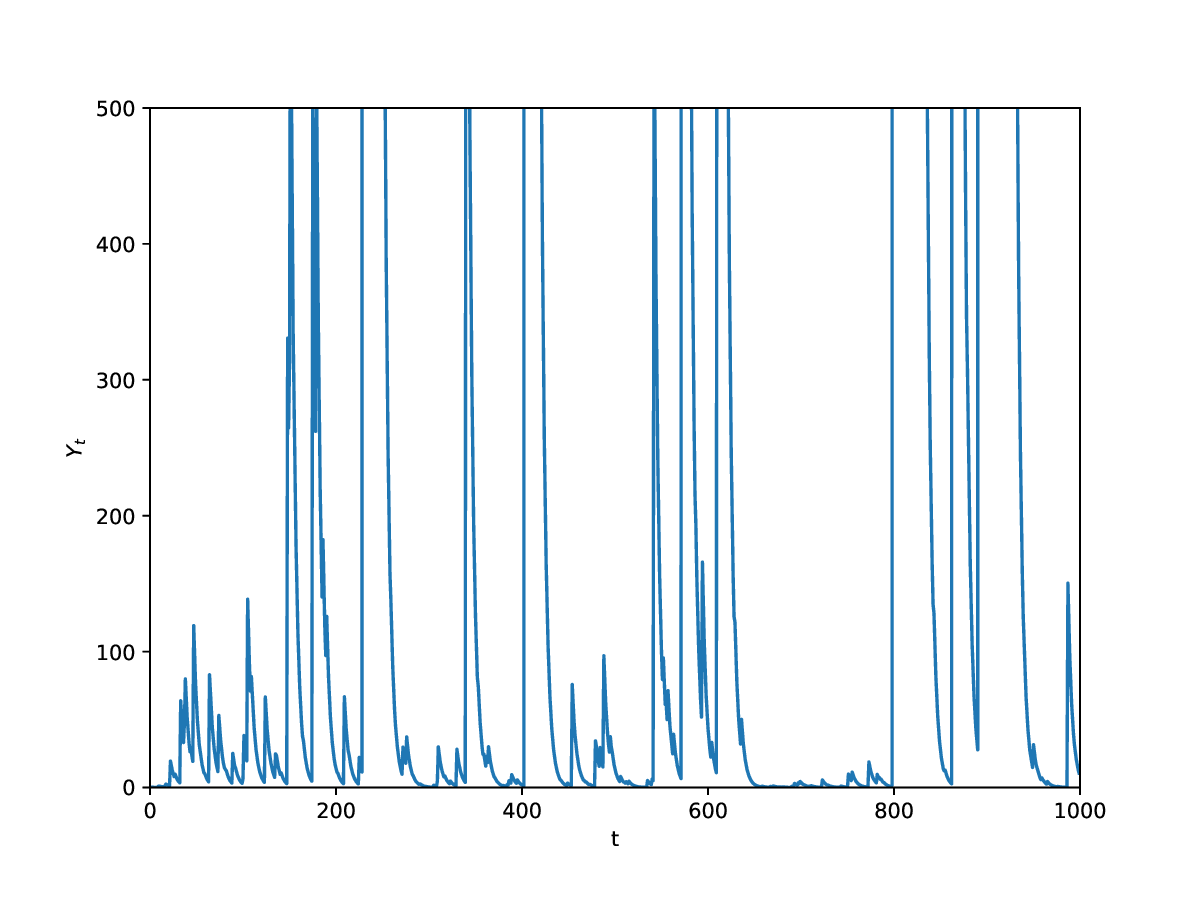}}
	\subfigure[$\epsilon_t$ for ${\alpha_1} = 0.4, \rho_1 = 0.4.$]{\includegraphics[width=0.49\textwidth]{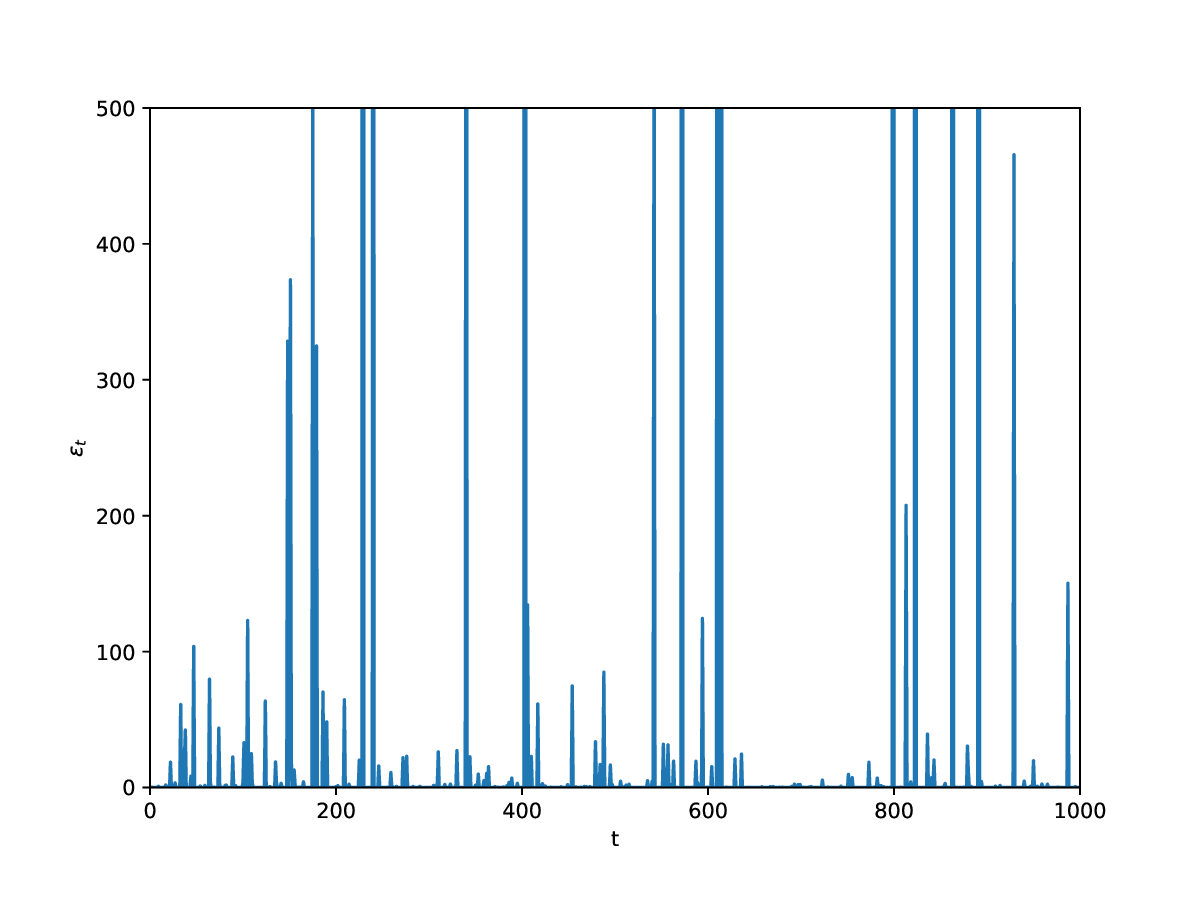}} \hfill\\
	\subfigure[$Y_t$ for $\alpha_2 = 0.6, \rho_2 = 0.8.$]{\includegraphics[width=0.49\textwidth]{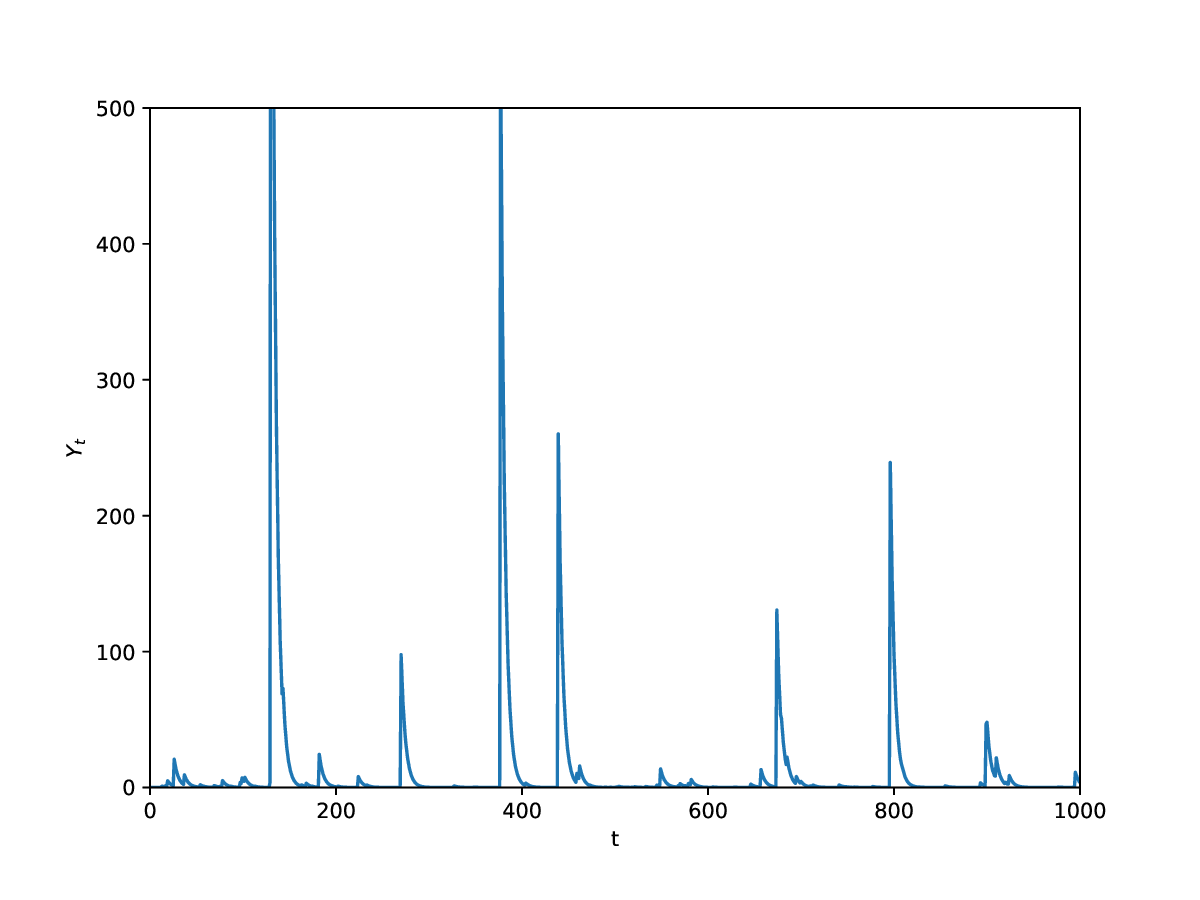}}
	\subfigure[$\epsilon_t$ for ${\alpha_2} = 0.6, \rho_2 = 0.8.$]{\includegraphics[width=0.49\textwidth]{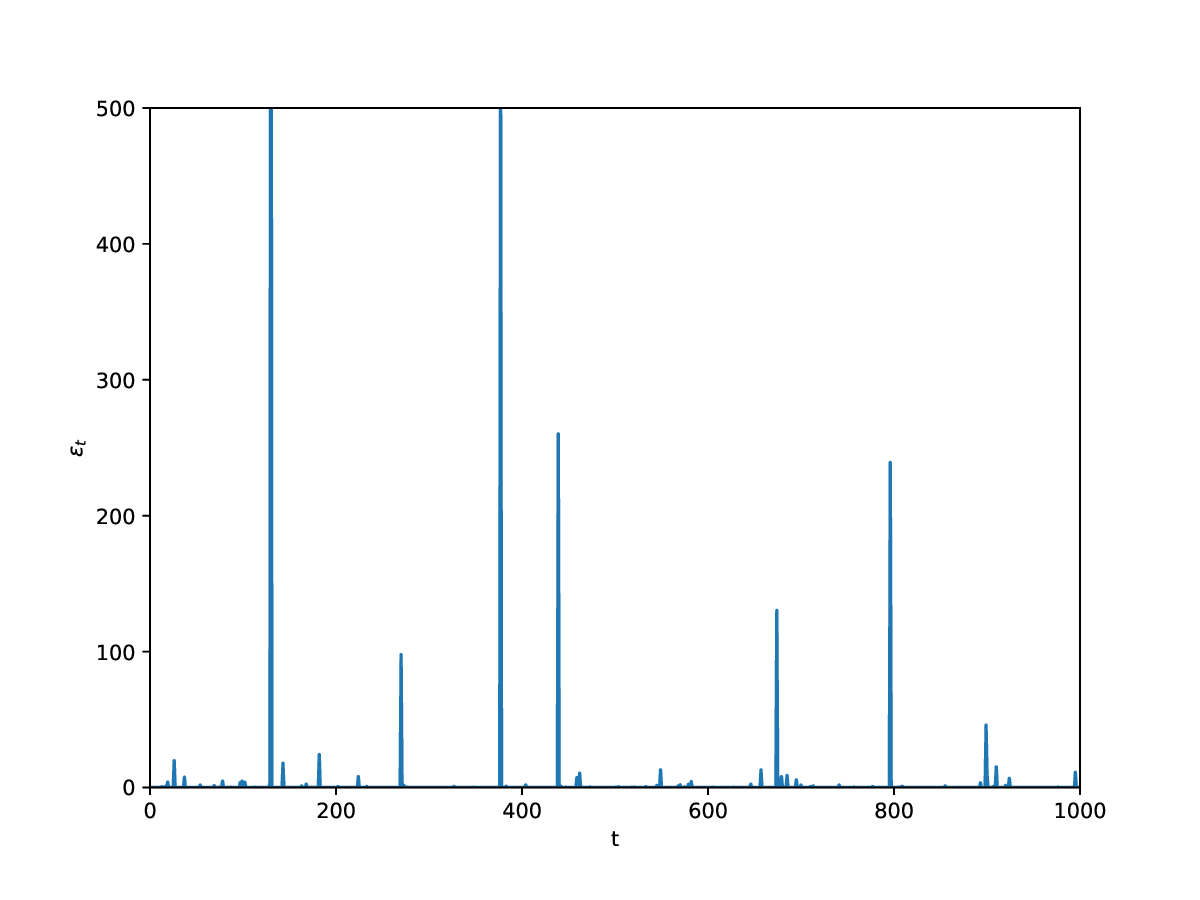}}
	\caption{The plots of time series $\{Y_t\}$ with ML($\alpha,1$) marginals with true parameters (a) $\alpha_1 = 0.4,\, \rho_1=0.4$ (c) $\alpha_2 = 0.6,\, \rho_2 = 0.8$ and the corresponding innovation series $\{\epsilon_t\}$ with true parameters (b) $\alpha_1 = 0.4, \rho_1 = 0.4$ (d) $\alpha_2 = 0.6, \rho_2 = 0.8$.}\label{fig3}
\end{figure}

\begin{table}
	\centering
	\caption{Parameter estimates of data 1 and data 2. }\label{tab3}
	
	\begin{tabular}{lllll}
		\hline\noalign{\smallskip}
		& $\hat{\alpha_1}(\alpha_1 = 0.4) $ & $\hat{\rho_1}(\rho_1 = 0.4)$ & $\hat{\alpha_2}(\alpha_2 = 0.6)$ & $\hat{\rho_2}(\rho_2 = 0.8)$ \\
		\noalign{\smallskip}\hline\noalign{\smallskip}
		& $0.4038$ & $0.4049$ & $0.6003$ & $0.7976$ \\ 
		
		\noalign{\smallskip}\hline
	\end{tabular}

\end{table}
\begin{figure}[H]
	\centering
	\subfigure[Boxplot for $\alpha_1 = 0.4,\rho_1 = 0.4.$]{\includegraphics[width=0.49\textwidth]{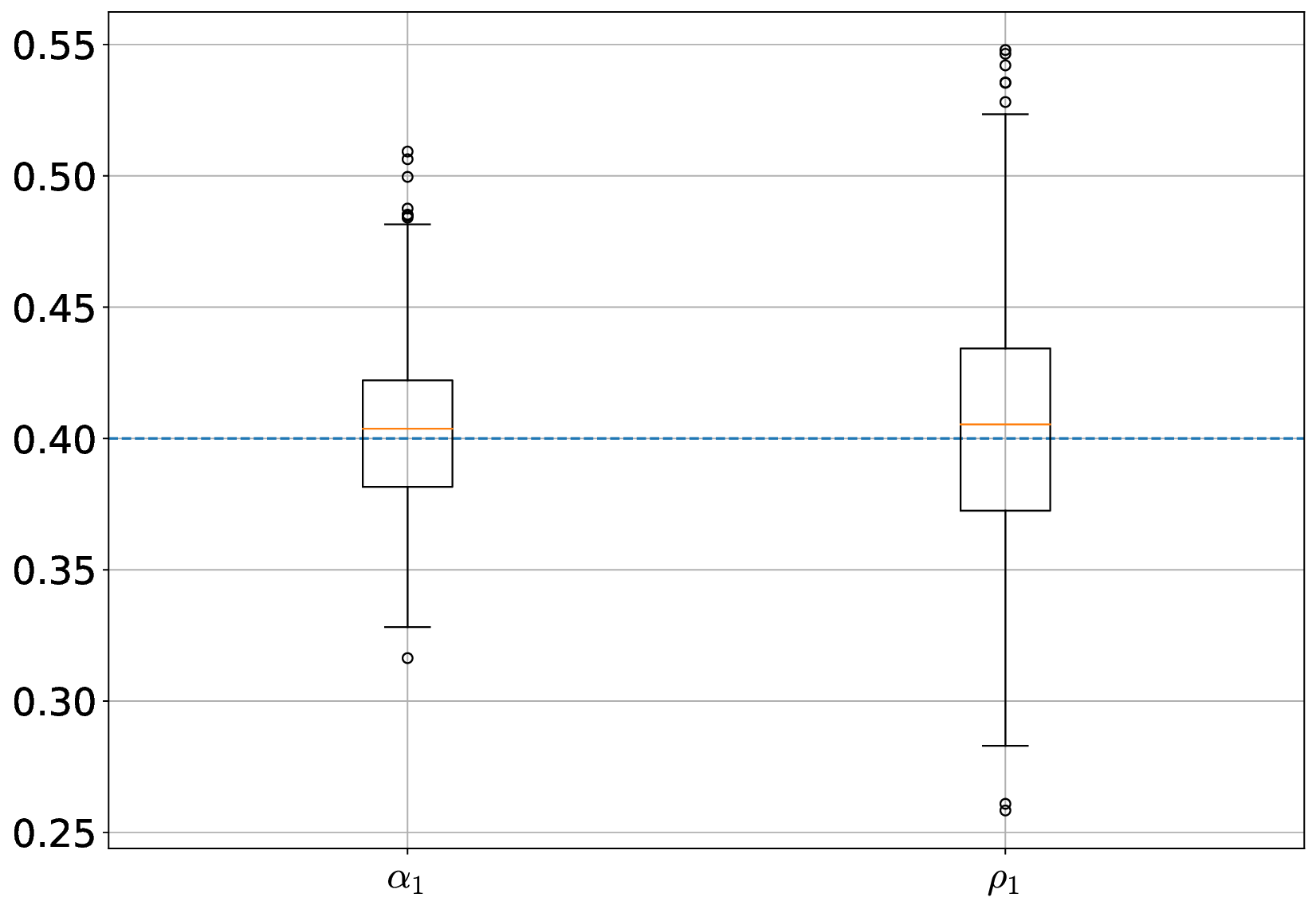}}\hfill
	\subfigure[Boxplot for ${\alpha_2} = 0.6, \rho_2 = 0.8.$]{\includegraphics[width=0.49\textwidth]{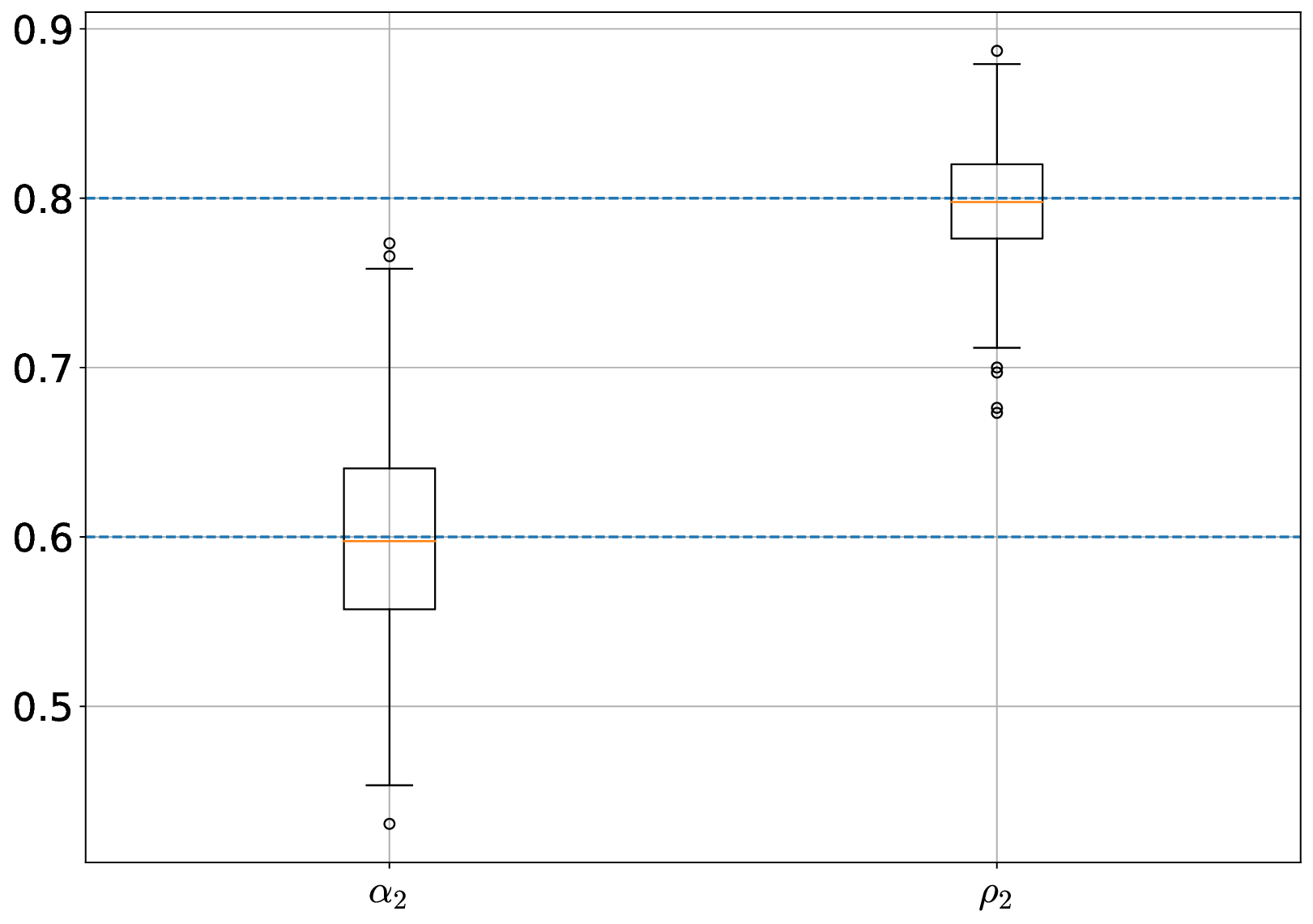}} 
	\caption{The boxplots for the parameter estimates of AR$(1)$ model with ML($\alpha,1$) marginals with true parameter values (a) $\alpha_1 = 0.4,\, \rho_1=0.4$ (b) $\alpha_2 = 0.6,\, \rho_2 = 0.8$. The simulated data has $500$ trajectories each of length $1000.$}\label{fig4}
\end{figure}

\begin{table}
	\centering
	\caption{RMSE and MAE of estimated parameters for data 1 and data 2. }\label{tab4}
	
	\begin{tabular}{lllll}
		\hline\noalign{\smallskip}
		& $\alpha_1 $ & $\rho_1$ & $\alpha_2$ & $\rho_2$ \\
		\noalign{\smallskip}\hline\noalign{\smallskip}
		RMSE & $0.0319$ & $0.0492$ & $0.0600$ & $0.0331$ \\ 
		MAE &  $0.0250$ & $0.0386$ & $0.0478$ & $0.0261$\\
		\noalign{\smallskip}\hline
	\end{tabular}
\end{table}


\section{Conclusion}\label{sec:ML5}
We first discuss some of the well-known properties of ML$(\alpha, \sigma)$ distribution and make use of its mixture representation to estimate the parameters. We propose to use the empirical Laplace transform method to estimate parameters of ML($\alpha, \sigma$) and assess the method's performance on four different set of simulated data. Afterwards, we apply the ML($\alpha, \sigma$) on high frequency trading data for oil futures. We compare the inter-arrival time of positive and negative jumps with exponential distribution. 
Further, the estimation of three-parameter ML($\alpha, \sigma, \gamma$) distribution is also performed using empirical Laplace transform method. The estimation of generalized version of Mittag-Leffler distribution is helpful in modeling various complex stochastic processes. We also provide the AR($1$) process with marginals from ML$(\alpha, 1)$ distribution and deduce the distribution of innovation terms and vice versa. Following this, we propose to use empirical Laplace transform to estimate the parameters of AR($1$) process. The simulation results imply that we can rely on this method for estimation and further model can be generalized.

\section*{Acknowledgments}
I would like to express my gratitute to Dr. Arun Kumar for generously sharing the data and his guidance throughout this study. 

\bibliography{bibfile}

\end{document}